\DeclareMathOperator*{\argmin}{argmin}
\theoremstyle{plain}
\newtheorem{theorem}{Theorem}[section]
\newtheorem{lemma}[theorem]{Lemma}
\theoremstyle{definition}
\newtheorem{definition}[theorem]{Definition}
\theoremstyle{remark}
\title{Worst-case Performance of Popular Approximate Nearest Neighbor Search Implementations: Guarantees and Limitations}
\author{
    Piotr Indyk \\
    MIT \\
    indyk@mit.edu \\
    \And
    Haike Xu \\
    MIT \\
    haikexu@mit.edu \\
}
\begin{document}

\maketitle

\begin{abstract}
Graph-based approaches to nearest neighbor search are popular and powerful tools for handling large datasets in practice, but they have limited theoretical guarantees.
We study the worst-case performance of recent graph-based approximate nearest neighbor search algorithms, such as HNSW, NSG and DiskANN. For DiskANN, we show that its ``slow preprocessing'' version provably supports approximate nearest neighbor search query with constant approximation ratio and poly-logarithmic query time, on data sets with bounded ``intrinsic'' dimension. 
For the other data structure variants studied, including DiskANN with ``fast preprocessing'',  HNSW and NSG,  we present a family of instances on which the empirical query time required to achieve a ``reasonable'' accuracy is linear in instance size. For example, for DiskANN, we show that the query procedure can take at least  $0.1 n$ steps on instances of size $n$ before it encounters {\em any} of the $5$ nearest neighbors of the query. 
\end{abstract}

\section{Introduction}
The nearest neighbor search (NN) problem is defined as follows: given a set of  $n$ points $P$ in a metric space $(X,D)$, build a data structure that, given any query point $q \in X$, returns $p \in P$ closest to $q$. More generally, given a parameter $k$, the data structure should report $k$ points in $P$ that are closest to $q$. Often, though not always, the metric space is a $d$-dimensional vector space with the distance function induced by the Euclidean norm. 
Since its introduction in the influential book by Minsky and Papert in the 1960s~\cite{minsky2017perceptrons}, the problem has found a tremendous number of applications in machine learning and computer vision~\cite{shakhnarovich2005nearest}. In most of those applications, the underlying metric is induced by a set of points in a high-dimensional space. Since in this setting  worst-case efficient nearest neighbor algorithms are unlikely to exist (see e.g.,~\cite{andoni2018approximate}),  various {\em approximate} formulations of the problem have been studied extensively. One popular formulation allows the algorithm to return any point $p' \in P$ whose distance to the query $q$ is at most  $c$ times the distance between $q$ and its true nearest neighbor in $P$; such point $p'$ is called a {\em $c$-approximate nearest neighbor}. In practice, the accuracy of an approximate data structure is often evaluated by estimating the recall, i.e.,  the average fraction of the true $k$ nearest neighbors returned by the data structure.

Over the last few decades, many nearest neighbor data structures have been proposed, both exact and approximate. The most popular classes of data structures are tree algorithms (e.g., kd-trees~\cite{arya1998optimal}); locality sensitive hashing~\cite{andoni2008near}; learning-to-hash and product quantization~\cite{wang2017survey,wang2015learning} ; and metric data structures. The last class includes algorithms that work for point-sets in {\em arbitrary} metrics, not just those in $d$-dimensional vectors normed spaces. This category can be further subdivided into algorithms based on metric trees~\cite{chavez2001searching,navarro2002searching}, divide and conquer algorithms~\cite{krauthgamer2004navigating,beygelzimer2006cover,elkin2023new}, and methods based on greedy search in proximity graphs~\cite{arya1993approximate,malkov2018efficient,fu2019fast,jayaram2019diskann}. See the surveys~\cite{clarkson2006nearest,li2019approximate} for further overview of these classes of algorithms.

Several algorithms in the latter class, such as HNSW~\cite{malkov2018efficient}, NSG~\cite{fu2019fast} and DiskANN~\cite{jayaram2019diskann}, are widely used in practice.\footnote{E.g., DiskANN has been developed and used at Microsoft and NSG at Alibaba.} They have been recently shown empirically to provide excellent tradeoffs between accuracy and query speed~\cite{li2019approximate}.  Their design is based on approximating theoretical concepts such as Delaunay or Relative Neighborhood Graphs. However, their worst-case performance is not well understood, especially when the dimension is high.  For example, the authors of \cite{malkov2018efficient} point out that ``Further analytical evidence is required to confirm whether the resilience of Delaunay graph approximations generalizes to higher dimensional spaces.''  
This stands in contrast to the older algorithms, e.g., those based on the divide and conquer approach, which come with worst-case performance guarantees. Those algorithms are typically analyzed under the assumption that the input point set $P$ has low {\em doubling dimension} (a measure of the intrinsic dimensionality of the point-set)\footnote{Informally, this means that any subset of points that falls into a  ball of radius $2r$ can be covered using a constant number of balls of radius $r$. See Preliminaries for the formal definition and discussion.}  and provide near-linear space and logarithmic query time bounds, with the big-Oh constants depending on the doubling dimension. This raises the question whether similar bounds can be obtained for the newer  algorithms. On the practical side, investigating  the worst-case behavior of data structures is important to understand their benefits and limitations.

\paragraph{Our results}  In this paper we initiate the study of the worst-case performance of the recent metric data structures based on proximity graphs. As in~\cite{jayaram2019diskann}, we mostly focus on three popular algorithms and their implementations:  HNSW, NSG and DiskANN. (In addition, we present similar results for other graph-based algorithms in the supplementary material section.) We present both upper and lower bounds on their worst-case search times. Our specific contributions are as follows:
\begin{itemize}
\item {\bf Upper bounds}: For one of the data structures studied, namely DiskANN version with ``slow preprocessing'', we are able to show a provable worst-case upper bound on its performance. Specifically, we show that (a) the greedy search procedure returns an $\left(\frac{\alpha+1}{\alpha-1}+\epsilon\right)$-approximate neighbor in $O\left(\log_{\alpha} \frac{\Delta}{(\alpha-1)\epsilon}\right)$ steps 
and (b) each step takes at most $O((4\alpha)^d \log \Delta)$ time. This implies that the overall running time is poly-logarithmic in $\Delta$ when $d$ is constant. Here $\alpha>1$ denotes a parameter of the DiskANN algorithm (described in Preliminaries, typically set to $2$), $d$ denotes the doubling dimension, while $\Delta$ denotes the {\em aspect} ratio of the input set $P$, i.e., the ratio between the diameter and the distance of the closest pair \footnote{We note running time bounds that depend on $\log (\Delta)$ versus those that depend on $\log (n)$ are incomparable. Although $\Delta$ could be much larger than $n$, we observed that its value is typically quite low. For example, the MNIST data set \cite{lecun1998gradient,deng2012mnist} contains $n=60,000$ points, while its aspect ratio $\Delta$ is around $20$ when the distances are measured according to the Euclidean norm.}.
We also show that our approximation bound is tight, and that the logarithmic dependence of the query time bound on $\Delta$ cannot be removed.
\item {\bf Lower bounds}: For the other data structure variants studied (NSG, HNSW and DiskANN with ``fast preprocessing'') we present a family of point sets of size $n$ for all $n$ large enough, such that for each $n$, the {\em empirical} query time required to achieve ``reasonable'' accuracy is linear in $n$. For example, for DiskANN, we show that the query procedure can take at least  $0.1 n$ steps before it encounters any of the 5 nearest neighbors of the query.  Remarkably, the point sets are relatively simple: they live in a {\em $2$-dimensional} Euclidean space, and therefore have a {\em constant doubling dimension} (see Preliminaries). We use implementations provided by the authors, publicly available on GitHub~\cite{NSG,HNSW,DiskANN}. Our hard instance examples are available on GitHub at \cite{worst_case_github}.
\end{itemize}

To the best of our knowledge, these are the first worst-case upper bounds and lower bounds for these data structures. 

We emphasize that our results do not contradict the empirical evaluations  given in the original papers, or in summary studies such as \cite{li2019approximate}. Indeed, these algorithms have been demonstrated to be very useful and can be highly effective in practice. Nevertheless, we believe that our results provide  important information about the behavior of these algorithms. For example, they demonstrate the importance of validating the quality of answers reported by the algorithms when applying them to new data sets. They also shed light on the types of data sets which result in suboptimal performance of the algorithms. 

\paragraph{Related work} Approximate nearest neighbor search has been studied extensively; the references in the introduction provide some of the main milestones of that rich body of research. In the context of this paper, we also mention \cite{laarhoven2018graph, prokhorenkova2020graph}, which studied theoretical properties of modern graph-based nearest neighbor data structure. However, their focus is on the average-case performance, i.e., under the assumption that the data is generated according to some well-defined distribution, like uniform over the sphere. In contrast, this work is focused on the worst-case behavior of those algorithms, and show that their empirical running time can be high even for relatively simple 2-dimensional data sets.

\section{Preliminaries}
We denote the underlying metric space by $(X,D)$. For any point $p \in X$ and radius $r>0$, we use $B(p,r)$ to denote a ball of radius $r$ centered at $p$, i.e.,  $B(p,r)=\{q\in X: D(p,q) \le r\}$.  

Consider a set of points $P$. We say that $P$ has the {\em doubling constant} $C$ if for any ball $B(p,2r)$ centered at some $p\in P$, the set $P \cap B(p,2r)$ can be covered using at most $C$ balls of radius $r$, and $C$ is the smallest number with this property. Doubling constant is a popular measure of ``intrinsic dimensionality'' of high dimensional point sets, see e.g., \cite{gupta2003bounded,krauthgamer2004navigating,beygelzimer2006cover}. The value $\log_2 C$ is called the {\em doubling dimension} of $P$. Doubling dimension is often used as a measure of the ``intrinsic dimensionality'' of a data set. It generalizes the ``standard'' (topological) dimension: for any data set $P \subset \mathbb{R}^d$ equipped with a metric $D(p_1,p_2)=\|p_1-p_2\|_p$, the doubling dimension of $P$ is at most $O(d)$. However, 
the doubling dimension of $P \subset \mathbb{R}^D$ could be much lower than $D$, e.g., if points in $P$ lie on a low-dimensional manifold. The doubling dimension can be viewed as a finite version of the fractal Hausdorff dimension. Empirical studies (e.g., ~\cite{faloutsos1994beyond}) showed that the fractal dimension of real data sets is often smaller than their ambient dimension $D$.

The following fact is standard and follows from the definition.

\begin{lemma}\label{lm:doubling_dimension}
Consider a set of points $P$ with doubling dimension $d$. For any ball $B(p,r)$ centered at some $p\in P$ and a constant $k$, the set $B(p,r) \cap P$ can be covered by a set of $m\le O(k^d)$ balls with diameter smaller than $r/k$, i.e. $B(p,r) \cap P \subseteq\bigcup^m_{i=1} B(p_i,r/k)$.
\end{lemma}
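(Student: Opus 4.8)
The plan is to iterate the doubling property $\lceil \log_2 k \rceil$ times. By definition of doubling constant $C$, the ball $B(p,r) \cap P$ can be covered by at most $C$ balls of radius $r/2$ (hence diameter $\le r$). Each of those smaller balls, being centered at a point of $X$ but containing points of $P$, can in turn be covered by $C$ balls of radius $r/4$, and so on. After $j$ rounds we have a cover by at most $C^j$ balls of radius $r/2^j$. Choosing $j = \lceil \log_2 k \rceil$ gives radius $r/2^j \le r/k$, hence diameter $\le 2r/2^j \le 2r/k$; to get diameter strictly smaller than $r/k$ one simply takes $j = \lceil \log_2 k \rceil + 1$, which only changes the constant. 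The resulting number of balls is $m \le C^{\lceil \log_2 k\rceil + 1} = 2C \cdot C^{\log_2 k} = 2C \cdot k^{\log_2 C} = O(k^d)$, using $d = \log_2 C$ and treating $C$ (equivalently $d$) as a constant absorbed into the $O(\cdot)$.

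First I would state the one-step doubling fact cleanly: for any ball $B(x,\rho)$ with $x \in X$ such that $B(x,\rho)\cap P \ne \emptyset$, the set $B(x,\rho)\cap P$ is contained in a union of at most $C$ balls of radius $\rho/2$. Strictly, the definition as stated in the paper applies to balls centered at points of $P$; I would note that if $B(x,\rho)$ contains some point $p'\in P$ then $B(x,\rho) \subseteq B(p',2\rho)$, so $B(x,\rho)\cap P \subseteq B(p',2\rho)\cap P$, which the doubling property covers with $C$ balls of radius $\rho$ — this is slightly lossy, so cleaner is to just apply the definition at scale $2\rho$ if needed, or to observe that standard treatments state the covering property for arbitrary centers. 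Either way the loss is a constant factor and does not affect the $O(k^d)$ conclusion.

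Then I would set up the induction: let $N(r,s)$ be the minimum number of balls of radius $\le s$ needed to cover $B(p,r)\cap P$. The base case is $N(r,r/2) \le C$. The inductive step is $N(r, r/2^{j+1}) \le C \cdot N(r,r/2^j)$, obtained by replacing each radius-$r/2^j$ ball in an optimal cover with $C$ balls of half the radius. Unrolling gives $N(r,r/2^j)\le C^j$. Finally, pick $j$ minimal with $2 \cdot r/2^j < r/k$, i.e. $2^j > 2k$, so $j = \lfloor \log_2(2k)\rfloor + 1 \le \log_2 k + 3$, whence $m \le C^{j} \le C^3 \cdot C^{\log_2 k} = C^3 k^{d} = O(k^d)$.

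I do not anticipate a genuine obstacle — this is a routine unrolling of the definition. The only point requiring a little care is the mismatch between the definition (covering balls centered at $P$) and the need to recurse on balls whose centers may drift to $X\setminus P$; this is handled by absorbing a constant factor as described above, or by restricting covering-ball centers to $P$ (one can always recenter a covering ball at any $P$-point it contains while doubling its radius, which again costs only a constant in the final exponent). Since $k$ and $d$ are treated as constants, these $O(1)$ slippages are harmless, and the statement $m \le O(k^d)$ follows.
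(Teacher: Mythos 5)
The paper offers no proof of this lemma --- it is stated as ``standard and follows from the definition'' --- and your argument is exactly the standard one (iterate the doubling property $\lceil\log_2 k\rceil+O(1)$ times and count $C^{O(\log_2 k)}=O(k^d)$ balls), so in substance you have supplied the proof the authors had in mind, and the main induction and the final accounting are correct under the standard reading of the definition in which the covering balls are centered at points of $P$: then each covering ball is itself a $P$-centered ball, the recursion $N(r,r/2^{j+1})\le C\cdot N(r,r/2^j)$ applies directly, and no recentering is ever needed.

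One of your two patches for the center issue, however, does not do what you claim. Recentering an arbitrary-centered covering ball $B(y,s)$ at a point $p'\in P\cap B(y,s)$ costs a factor $2$ in the radius, and a single application of the doubling property then only halves it back: per level the net progress is zero, so the ``recenter, then subdivide'' loop stalls at radius about $r/2$ rather than losing ``only a constant in the final exponent.'' This is not merely a bookkeeping annoyance: if the definition is read literally with covering centers allowed anywhere in the ambient space $X$, the lemma is actually false. Take $X$ to be a star with hub $o$ and $n$ unit-length rays and $P$ the set of leaf endpoints (pairwise distance $2$, distance $1$ to $o$); every $P$-centered ball $B(p,2r)$ has $P\cap B(p,2r)$ covered by the single ball $B(p,r)$ or $B(o,r)$, so $C=1$ and $d=0$, yet $P\cap B(p,2)$ consists of $n$ points at mutual distance $2$ and cannot be covered by $O(1)$ balls of small diameter. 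So the resolution is not a constant-factor absorption but adopting the standard convention (covering balls centered in $P$, equivalently treating $P$ itself as the doubling metric space), which is clearly what the paper intends; with that reading your proof is complete as written.
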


We use $d$ to denote the doubling dimension of the point set $P$, $\Delta=\frac{D_{max}}{D_{min}}$ to denote the aspect ratio of the input set $P$, where $D_{max}$ ($D_{min}$) is the maximal (minimal) distance between any pair of vertices in the point set. 
For two points $x_u,x_v$ in $P=\{x_1,...x_n\}$, we use $D(x_u,x_v)$ to denote the distance between them, or sometimes $D(u,v)$ for simplicity.

\section{Analysis of DiskANN}\label{sec:analysis_diskann}

In this section we show  bounds on the performance of DiskANN with slow preprocessing. 

\subsection{DiskANN recap}

In this section we give an overview of the DiskANN procedures. For the full description the reader is referred to the original paper~\cite{jayaram2019diskann}, or  section~\ref{s:appendix-diskann} (supplementary material). 

The DiskANN data structure is based on a directed graph $G$ over the set $P$, i.e., the set of vertices $V$ of $G$ are associated with the set of points $P$. 
After the graph is constructed, to answer a given query $q$, the algorithm performs search starting from some vertex $s$. In what follows we describe the search and insertion procedure in more detail.

The search procedure, $GreedySearch(s,q,L)$, has the following parameters: the start vertex $s$, the query point $q$, and the queue size $L$. It performs a best-first-search using a queue of with a bounded length $L$, until the $L$ vertices $v$ with the smallest value of $D(v,q)$ {seen so far} are all scanned. Upon completion, it returns {a list of vertices in an increasing distance from $q$ where the first vertex (or the first $k$ vertices) are answers for the query}. Note that as long as the graph is connected, the procedure runs for at least $L$ steps. The total running time of the procedure is bounded by the number of steps times the out-degree bound of the graph $G$.

The construction of the graph $G=(V,E)$ is done by a repeated invocation of a procedure called $RobustPruning$. For any vertex $v$, a set of vertices $U$ (specified later), and parameters $\alpha>1$ and $R$,  $RobustPruning(v,U,\alpha,R)$ proceeds as follows. First, the set $U$ is sorted in the increasing order of the distance to $v$. The algorithm traverses this sequence in order. After encountering a new vertex $u$, the algorithm  deletes all other vertices $w$ from $U$ such that $D(u,w)\cdot\alpha<D(v,w)$. Finally, the node $v$ is connected to all vertices in $U$ that have not been pruned.

The starting point of the DiskANN data structure construction algorithm\footnote{See the discussion in Section 2.2 of \cite{jayaram2019diskann}.} 
is the following simple procedure: for each vertex $v$, execute $RobustPruning(v,U,\alpha,R)$ with $U=V$ and $R=n$. That is, robust pruning is applied to {\em all} vertices in the graph. We refer to this procedure as {\em slow preprocessing}, as it can be seen that a naive implementation of this method takes time $O(n^3)$.  Although the construction time is slow, we show that this construction method provably constructs a graph whose degree depends only logarithmically on the aspect ratio of the graph (assuming constant doubling dimension), and guarantees that the greedy search procedure has polylogarithmic running time. We note that this result is inspired by an observation in \cite{jayaram2019diskann} about convergence of greedy search in a logarithmic number of steps, though to obtain our result we also need to bound the degree of the search graph and analyze the approximation ratio.

Since the slow-preprocessing-algorithm is too slow in practice, the authors of \cite{jayaram2019diskann} propose a faster heuristic method to construct the graph $G$, which we call {\em fast preprocessing} method. 
At the beginning, the graph $G$ is initialized to be a random $R$-regular graph. Then the construction of the graph $G=(V,E)$ is done incrementally. The construction algorithms make two passes of the point set in random order. For each vertex $v$ met, the algorithm computes a set of vertices $U = GreedySearch(s,x_v,L)$ (for some starting vertex $s$) and then calls the  pruning procedure on $U$, not $V$. That is, it executes $RobustPruning(v,U,\alpha,R)$. 
After pruning is performed, the insertion procedure adds both edges $(v,u)$ and $(u,v)$ for all vertices $u\in U$ output by the prunning procedure. Finally, if the degree of any of $u\in U$ exceeds a threshold $R$, then  the set of neighbors of $u$ is pruned via $RobustPruning(u,N_{out}(u),\alpha,R)$ as well. This construction method is implemented and evaluated in the paper. 

\subsection{Analysis: preprocessing}

For the sake of simplicity, we let $RobustPruning(p,V,\alpha)$ be the no-degree-limit version of edge selection, i.e.,  where $R=n$. We first analyze the property of graph constructed by the no-degree-limit pruning, and then show that setting $R=O((4\alpha)^d\log \Delta)$ yields equivalent results. 

\begin{definition}[$\alpha$-shortcut reachability]\label{def:alpha_shortcut_reachable}
Let $\alpha \ge 1$. We say a graph $G=(V,E)$ is $\alpha$-shortcut reachable from a vertex $p$ if for any other vertex $q$, either $(p,q)\in E$, or there exists $p'$ s.t. $(p,p')\in E$ and  $D(p',q)\cdot\alpha\le D(p,q)$. We say a graph $G$ is $\alpha$-shortcut reachable if $G$ is $\alpha$-shortcut reachable from any vertex $v\in V$.
\end{definition}

First, we show that the slow preprocessing algorithm constructs a graph that is $\alpha$-reachable. 

\begin{lemma}[$\alpha$-shortcut reachable]\label{lm:pruning_alpha_reachable}
For each vertex $p$, if we connect $p$ to the output of $RobustPruning(p,V,\alpha)$, then the graph formed is $\alpha$-shortcut reachable.
\end{lemma}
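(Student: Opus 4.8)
The plan is to keep track of which vertices of $V$ are kept by the pruning of $p$, i.e.\ which ones become out-neighbors of $p$ in the constructed graph, and then to argue that whenever a vertex $q$ is \emph{discarded} during $RobustPruning(p,V,\alpha)$, the vertex responsible for discarding it is itself a surviving out-neighbor of $p$ that is a factor-$\alpha$ closer to $q$ than $p$ is — which is exactly what the second clause of Definition~\ref{def:alpha_shortcut_reachable} asks for. Recall the mechanics: the vertices are scanned in non-decreasing order of distance to $p$, and when a still-present (``new'') vertex $u$ is scanned, every remaining $w$ with $D(u,w)\cdot\alpha < D(p,w)$ is removed; the out-neighbors of $p$ are precisely the vertices still present at the end.

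The key structural fact I would establish first is that \emph{a vertex that has been scanned is never removed afterwards}, so that ``scanned'' and ``surviving'' coincide. Suppose, towards a contradiction, that some vertex $w$ scanned strictly later than $u$ removes $u$, i.e.\ $D(w,u)\cdot\alpha < D(p,u)$. Since $w$ comes after $u$ in the sorted order, $D(p,w)\ge D(p,u)$, so $D(u,w)\cdot\alpha < D(p,u)\le D(p,w)$; but this means $u$ would have removed $w$ at the time $u$ was scanned, contradicting that $w$ is later scanned as a ``new'' vertex. Vertices scanned before $u$ cannot have removed $u$ either, since $u$ survived long enough to be scanned. Hence every scanned vertex survives to the end. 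This is the step I expect to be the crux of the argument; everything else is bookkeeping, but without it one could not conclude that the vertex which discards $q$ is actually an out-neighbor of $p$ (a priori it could be killed off by an even later vertex). Note that the argument uses two features essentially: that the pruning test is a \emph{strict} inequality, and that $D(p,\cdot)$ is monotone along the scan.

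Given this, the lemma follows quickly. Fix $p$ and any vertex $q\neq p$. If $q$ survives the pruning, then by construction $(p,q)\in E$ and we are done via the first clause of $\alpha$-shortcut reachability. Otherwise $q$ is discarded by some vertex $u$ scanned while $q$ is still present, and the pruning rule gives $D(u,q)\cdot\alpha < D(p,q)$, hence in particular $D(u,q)\cdot\alpha \le D(p,q)$. Since $u$ was scanned, the structural fact above guarantees $u$ survives, so $(p,u)\in E$; thus $p'=u$ witnesses the second clause. As $p$ was an arbitrary vertex, $G$ is $\alpha$-shortcut reachable, completing the proof.
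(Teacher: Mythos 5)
Your proof is correct and takes essentially the same route as the paper's: if $q$ is not an out-neighbor of $p$, then the vertex that pruned it witnesses the $\alpha$-shortcut condition $D(p',q)\cdot\alpha\le D(p,q)$. The ``scanned vertices survive'' fact you establish is treated as immediate in the paper---in the actual pseudocode a scanned vertex is placed into $N_{out}(p)$ the moment it is scanned and only the remaining candidate set $U$ is ever pruned, so it can never be removed later---but your monotonicity argument is a valid way to justify the same point from the textual description of the procedure.
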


\begin{proof}
By Definition~\ref{def:alpha_shortcut_reachable}, we only need to prove that the constructed graph is $\alpha$-shortcut reachable from each vertex. Suppose that, for some vertex $p$, vertex $q$ is not connected to $p$. Then, in $RobustPruning(p,V,\alpha)$, there must have existed a vertex $p' \in V$ connected to $p$ s.t. $D(p',q)\le D(p,q)/\alpha$.
\end{proof}

Next, we show that the graph produced by no degree limit $RobustPruning$ is actually sparse.

\begin{lemma}[sparsity]\label{lm:sparsity}
For any vertex $p$, let $U=RobustPruning(p,V,\alpha)$, then $|U|\le O((4\alpha)^d\log \Delta)$  where $d$ is the doubling dimension of the point set $P$.
\end{lemma}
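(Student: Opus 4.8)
The plan is to partition the neighbors $U = RobustPruning(p,V,\alpha)$ into $O(\log \Delta)$ geometric ``shells'' by distance from $p$, and then bound the number of surviving vertices in each shell by $O((4\alpha)^d)$ using the doubling property together with the packing constraint enforced by the pruning rule. Concretely, for $i = 0, 1, \dots, \lceil \log_2 \Delta \rceil$ let $S_i = \{ u \in U : D_{min}\cdot 2^i \le D(p,u) < D_{min}\cdot 2^{i+1} \}$ (using that every surviving $u$ satisfies $D_{min} \le D(p,u) \le D_{max} = \Delta\cdot D_{min}$). Since there are $O(\log \Delta)$ nonempty shells, it suffices to show $|S_i| \le O((4\alpha)^d)$ for each $i$.

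Fix a shell $S_i$ and set $r = D_{min}\cdot 2^{i+1}$, so that $S_i \subseteq B(p,r)$. By \cref{lm:doubling_dimension} applied with $k = 4\alpha$, the set $B(p,r)\cap P$ can be covered by $m = O((4\alpha)^d)$ balls $B(p_j, r/(4\alpha))$, $j = 1,\dots,m$. I claim each such small ball contains at most one vertex of $S_i$, which immediately gives $|S_i| \le m = O((4\alpha)^d)$. Suppose for contradiction that two distinct vertices $u, w \in S_i$ lie in the same ball $B(p_j, r/(4\alpha))$. Then $D(u,w) \le 2 \cdot r/(4\alpha) = r/(2\alpha)$. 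Without loss of generality assume $D(p,u) \le D(p,w)$, so that $u$ is processed before $w$ in the distance-sorted traversal of $RobustPruning$. Since both lie in shell $S_i$, we have $D(p,w) \ge D_{min}\cdot 2^i = r/2$. Therefore $D(u,w)\cdot\alpha \le (r/(2\alpha))\cdot\alpha = r/2 \le D(p,w)$, so when the algorithm encounters $u$ it deletes $w$ from $U$ (the pruning condition $D(u,w)\cdot\alpha < D(p,w)$ holds, possibly after adjusting the shell boundary to make the inequality strict — e.g.\ replace $2^i$ by $2^i(1+\delta)$ for a tiny $\delta$, or observe equality can be handled by a slightly smaller covering radius). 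Hence $w \notin U$, contradicting $w \in S_i \subseteq U$.

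Summing over shells gives $|U| = \sum_i |S_i| \le O(\log\Delta) \cdot O((4\alpha)^d) = O((4\alpha)^d \log \Delta)$, as claimed.

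The main obstacle is the boundary case where the strict inequality $D(u,w)\cdot\alpha < D(p,w)$ in the pruning rule is only met with equality; the fix is routine — shrink the covering radius in \cref{lm:doubling_dimension} by a constant factor (say use balls of diameter $r/(5\alpha)$, which only changes the constant in $O((4\alpha)^d)$), so that $D(u,w)\cdot\alpha \le (2r/(5\alpha))\cdot\alpha = 2r/5 < r/2 \le D(p,w)$ holds strictly. One should also double-check the edge effects of the shell decomposition (the first and last shells) and confirm that $D(p,u) \ge D_{min}$ for every $u$ that survives pruning, which holds since $u \ne p$ and $D_{min}$ is the minimum pairwise distance.
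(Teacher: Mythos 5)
Your proof is correct and follows essentially the same route as the paper's: partition the surviving neighbors into $O(\log\Delta)$ geometric annuli by distance from $p$, cover each annulus with $O((4\alpha)^d)$ small balls via \cref{lm:doubling_dimension}, and argue that the pruning rule leaves at most one survivor per small ball. The only differences are cosmetic (you index shells upward from $D_{min}$ rather than downward from $D_{max}$, and you treat the processing order and the boundary/strictness issue more explicitly than the paper does).
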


\begin{proof}
We use $Ring(p,r_1,r_2)$ to denote the set of vertices that lie in $B(p,r_2)$ but not in $B(p,r_1)$. We use $D_{max}$ ($D_{min}$) to denote the maximal (minimal) distances between a pair of vertices in the point set $P$ and by definition $\Delta=\frac{D_{max}}{D_{min}}$. For each $i\in[\log_2 \Delta]$, we consider $Ring(p,D_{max}/2^i,D_{max}/2^{i-1})$ separately. We cover $Ring(p,D_{max}/2^i,D_{max}/2^{i-1})$ using balls with radius ${D_{max}}/{\alpha2^{i+1}}$ by Lemma~\ref{lm:doubling_dimension}. The number of balls required is bounded by $O((4\alpha)^d)$. Because every two points in the same ball have distance at most ${D_{max}}/{\alpha2^i}$ from each other, and this value is $\alpha$ times smaller than their distance lower bound to $p$, at most one of them will remain after performing $RobustPruning(p,V,\alpha)$. Therefore, the number of vertices remain after performing $RobustPruning(p,V,\alpha)$ is upper bounded by $O((4\alpha)^d\log\Delta)$.
\end{proof}

\subsection{Analysis: query procedure}

We now show that if the graph is constructed using the slow indexing algorithm analyzed in the previous section, then $GreedySearch(s,q,1)$ starting from any vertex $s$ returns an $(\frac{\alpha+1}{\alpha-1})$-approximate nearest neighbor from query $q$ in a logarithmic number of steps.

\begin{theorem}\label{thm:alpha_reachable_ub_out}
Let $G=(V,E)$ be an $\alpha$-shortcut reachable graph constructed using the slow preprocessing method. Consider $GreedySearch(s,q,L)$ starting with any vertex $s \in V$ and $L=1$ (i.e., the algorithm performs no back-tracking) answering query $q$. The algorithm finds an $\left(\frac{\alpha+1}{\alpha-1}+\epsilon\right)$-approximate nearest neighbor in $O\left(\log_{\alpha}\frac{\Delta}{(\alpha-1)\epsilon}\right)$ steps.
\end{theorem}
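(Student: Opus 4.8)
The plan is to track a single ``current vertex'' $v$ through the greedy walk and show that its distance to $q$ contracts geometrically until it becomes comparable to $D^* := D(q, p^*)$, where $p^*$ is the true nearest neighbor of $q$ in $P$. The key structural input is $\alpha$-shortcut reachability (Lemma~\ref{lm:pruning_alpha_reachable}): from the current vertex $v$, there is either a direct edge to $p^*$, or an edge $(v, v')$ with $D(v', p^*) \le D(v, p^*)/\alpha$. In the second case, by the triangle inequality $D(v', q) \le D(v', p^*) + D^* \le D(v, p^*)/\alpha + D^*$, and since $D(v, p^*) \le D(v,q) + D^*$, we get $D(v', q) \le D(v,q)/\alpha + (1 + 1/\alpha) D^*$. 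So one greedy step (which moves to a neighbor at least as good as $v'$, since $v'$ is a neighbor of $v$) replaces the ``excess'' $D(v,q)$ by roughly $D(v,q)/\alpha$ plus an additive term of order $D^*$. I would iterate this recursion: after $t$ steps, $D(v_t, q) \le \alpha^{-t} D(v_0, q) + c(\alpha) D^*$ for a constant $c(\alpha) = \frac{\alpha+1}{\alpha-1}$ coming from summing the geometric series of additive terms.

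Next I would turn this into the stated bounds. Since $D(v_0, q) \le D_{max} + D^* $ and $D^* $ could be as small as... actually the cleanest route: the walk makes genuine progress (strictly decreasing distance to $q$) only while $D(v_t,q)$ exceeds the ``floor'' of about $c(\alpha) D^*$; once $\alpha^{-t} D(v_0,q) \le \epsilon (\alpha - 1) D^* / \text{(something)}$, the residual additive term is within $(c(\alpha) + \epsilon) D^*$. Bounding $D(v_0, q) / D^* \le (D_{max} + D^*)/D^* $; the worst case is when $D^*$ is comparable to $D_{min}$, giving a ratio of roughly $\Delta$. Solving $\alpha^{-t} \Delta \le (\alpha-1)\epsilon$ yields $t = O\!\left(\log_\alpha \frac{\Delta}{(\alpha-1)\epsilon}\right)$ steps, matching the claim. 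A subtlety is that greedy search with $L=1$ moves to the best neighbor of $v_t$, which is at least as close to $q$ as $v'$, so the recursion only improves; and the walk terminates at a local minimum, which by $\alpha$-shortcut reachability must already satisfy the approximation guarantee (if $v$ is a local min and $v \ne p^*$, the shortcut neighbor $v'$ would be strictly closer unless $D(v,p^*)$ is already small, forcing $D(v,q) \le c(\alpha) D^*$).

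I also need to connect the degree bound into ``per-step cost'': by Lemma~\ref{lm:sparsity} the out-degree is $O((4\alpha)^d \log \Delta)$ (this is the $R$ in the theorem statement's remark), so each greedy step scans that many neighbors — but for the step-count statement of Theorem~\ref{thm:alpha_reachable_ub_out} this is not needed, only the number of steps.

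The main obstacle I anticipate is handling the additive $D^*$ term carefully enough to get the exact constant $\frac{\alpha+1}{\alpha-1}$ rather than a loose $O(1)$, and in particular making sure the termination/local-minimum argument is airtight: one must rule out that the greedy walk gets ``stuck'' at a vertex that is a local minimum with respect to $q$ but not within the claimed ratio. This is exactly where $\alpha$-shortcut reachability (as opposed to mere connectivity) is essential — it guarantees that any vertex far from $q$ relative to $D^*$ has a strictly improving neighbor, so local minima are automatically good approximations. The geometric-series bookkeeping and the translation of the residual into the $\epsilon$ slack and the $\log_\alpha$ iteration count are routine once the one-step recursion is established.
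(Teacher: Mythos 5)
Your one-step recursion is exactly the paper's: using $\alpha$-shortcut reachability plus the triangle inequality to get $d_{i+1}\le \frac{d_i+D^*}{\alpha}+D^*$ and summing the geometric series to obtain $d_i \le \alpha^{-i}D(s,q)+\frac{\alpha+1}{\alpha-1}D^*$ (inequality (1) in the paper), and your local-minimum observation correctly explains why a terminal vertex is a $\frac{\alpha+1}{\alpha-1}$-approximation. However, there is a genuine gap in how you convert this recursion into the claimed step count. You assert that ``the worst case is when $D^*$ is comparable to $D_{min}$, giving a ratio of roughly $\Delta$,'' but nothing forces this: the query $q$ is not a data point, so $D^*=D(q,a)$ can be arbitrarily smaller than $D_{min}$. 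In that regime, driving the residual $\alpha^{-t}D(s,q)$ below $\epsilon D^*$ requires $t\approx\log_\alpha\frac{D(s,q)}{\epsilon D^*}$, which is unbounded in terms of $\Delta$, so your multiplicative-slack argument does not yield the stated $O\left(\log_\alpha \frac{\Delta}{(\alpha-1)\epsilon}\right)$ bound. Your local-minimum remark does not rescue this, because it only certifies the quality of the vertex where the walk stops; it gives no bound on how many steps are needed before the walk reaches a vertex within the claimed ratio.

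The paper closes this with a separate case (its Case 3): if $D(a,q)<\frac{\alpha-1}{4(\alpha+1)}D_{min}$, then every data point $v\neq a$ satisfies $D(v,q)>D_{min}/2$ (triangle inequality through $a$), so as soon as the recursion forces $d_i$ below $D_{min}/2$ --- which happens after $O(\log_\alpha\Delta)$ steps since $D(s,q)\le 2D_{max}$ and the additive term is at most $D_{min}/4$ --- the current vertex must be the \emph{exact} nearest neighbor, which is stronger than the required approximation. (The far-query case $D(s,q)>2D_{max}$ is handled as you do, via $D(s,q)\le D_{max}+D^*$, so $D^*\ge D(s,q)/2$.) Adding this small-$D^*$ case analysis would make your argument complete and essentially identical to the paper's.
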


\begin{proof}
Let $a$ be the nearest neighbor of $q$, $v_i$ be the $i$-th scanned vertex, $d_i=D(v_i,q)$, with approximate ratio $c_i=\frac{d_i}{D(a,q)}$. We use $\Delta=\frac{D_{max}}{D_{min}}$ to denote the aspect ratio of the vertex set.

We know that the distance between $v_i$ and $a$ is no more than $d_i+D(a,q)$ by triangle inequality, and because $G$ is $\alpha$-shortcut reachable, each $v_i$ is either connected to $a$, or $v_i$ is connected to a vertex $v'$ whose distance to $a$ is shorter than $\frac{d_i+D(a,q)}{\alpha}$. 
In best-first-search, the next scanned vertex $v_{i+1}$ should have distance from $q$ no farther than $v'$, which is at most $\frac{d_i+D(a,q)}{\alpha}+D(a,q)$. By induction: 
\begin{equation}
d_i\le \frac{D(s,q)}{\alpha^i}+\frac{\alpha+1}{\alpha-1}D(a,q)\label{line:induction}
\end{equation}

Consider the following three cases, depending on the value of $D(a,q)$:

Case (1): $D(s,q)>2D_{max}$. In this case we have $D(a,q)>D(s,q)-D(a,s)>D(s,q)-D_{max}>D(s,q)/2$. Plugging this into  inequality \ref{line:induction} gives us $c_i=\frac{d_i}{D(a,q)}\le\frac{D(s,q)}{\alpha^iD(a,q)}+\frac{\alpha+1}{\alpha-1}\le\frac{2}{\alpha^i}+\frac{\alpha+1}{\alpha-1}$. Therefore, for any $\epsilon>0$, we get a $\left(\frac{\alpha+1}{\alpha-1}+\epsilon\right)$-approximate nearest neighbor in $\log_{\alpha}\frac{2}{\epsilon}$ steps.

Case (2): $D(s,q)\le2D_{max}$ and $D(a,q)\ge\frac{\alpha-1}{4(\alpha+1)}D_{min}$. By  inequality \ref{line:induction}, the algorithm reaches an $(\frac{\alpha+1}{\alpha-1}+\epsilon)$-approximate nearest neighbor when $\frac{D(s,q)}{\alpha^i}<\epsilon D(a,q)$. Substituting the value of $D(s,q)$ and $D(a,q)$ with corresponding upper and lower bound, we obtain that the number of steps needed is $\log_{\alpha}\frac{8(\alpha+1)\Delta}{(\alpha-1)\epsilon}\le O\left(\log_{\alpha}\frac{\Delta}{(\alpha-1)\epsilon}\right)$

Case (3): $D(s,q)\le2D_{max}$ and $D(a,q)<\frac{\alpha-1}{4(\alpha+1)}D_{min}$. Suppose at step $i$, $d_i>d(a,q)$ is not the nearest neighbor. Here we obtain a new lower bound for $d_i$. We know $d(v_i,a)>D_{min}$, $d(v_i,q)>d(a,q)$, and $d(a,q)<D_{min}$. By triangle inequality, we have $d_i=d(v_i,q)>d(v_i,a)/2>D_{min}/2$. Combing this with inequality \ref{line:induction}, we obtain that if $v_i$ is not the exact nearest neighbor, it must satisfy $\frac{D_{min}}{2}\le d_i\le \frac{D(s,q)}{\alpha^i}+\frac{\alpha+1}{\alpha-1}D(a,q)\le \frac{2D_{max}}{\alpha^i}+\frac{D_{min}}{4}$. This can only happen when $i\le\log_{\alpha}8\Delta$. Therefore, the algorithm reaches the exact nearest neighbor in $O(\log_{\alpha}\Delta)$ steps.
\end{proof}

We note that similar neighbor selection strategies in the greedy search procedure are also used in HNSW~\cite{malkov2018efficient} and NSG~\cite{fu2019fast}. However, we cannot prove query performance guarantees similar to the ones above, as those algorithms use $\alpha=1$.

\subsection{A tight convergence rate lower bound for DiskANN}

In this section we show that the logarithmic dependence of the query time bound on $\Delta$ is unavoidable, i.e., we cannot replace $\log \Delta$ with $\log n$. The proof is deferred to appendix~\ref{s:Delta}.

\begin{theorem}\label{thm:alpha_lb_delta}
For any choice of $\alpha>1$, there exists a set of $n=2k-1$ points on a one dimensional line with aspect ratio $\Delta=O(\alpha^n)$, and a query $q$, such that after the slow preprocessing version of DiskANN, greedy search must scan at least $\Omega(\log \Delta)$ or $\Omega(n)$ vertices before reaching an $O(1)$-approximate nearest neighbor of $q$.
\end{theorem}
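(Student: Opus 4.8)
The plan is to build a one-dimensional geometric point set on which the graph produced by slow preprocessing has, in the direction of the query, no shortcut edges at all, so that $GreedySearch$ is forced to walk it vertex by vertex, paying $\Omega(n)=\Omega(\log\Delta)$ steps. Concretely, place the query at the origin, $q=0$, and let $P=\{p_1,\dots,p_n\}$ with $p_j=\alpha^{\,j}$ and $n=2k-1$; all points lie on the positive half-line. The nearest neighbor of $q$ is $p_1$, at distance $\alpha$, and for any fixed constant $C$ the only $C$-approximate nearest neighbors in $P$ are the $p_j$ with $\alpha^{\,j-1}\le C$, i.e.\ $j\le 1+\log_\alpha C$, a constant. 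The closest pair in $P$ is $\{p_1,p_2\}$, at distance $\alpha^2-\alpha$, and the diameter is $p_n-p_1$, so $\Delta=\frac{\alpha^{\,n-1}-1}{\alpha-1}=\Theta(\alpha^{\,n})$ and $\log_\alpha\Delta=n-O(1)$; thus on this instance $\Omega(\log\Delta)$ and $\Omega(n)$ are literally the same bound, which is why the statement offers an ``or''. Finally, since $P\subset\mathbb R$, its doubling dimension is $O(1)$, so by Lemma~\ref{lm:sparsity} the degree cap $R$ never binds and the constructed graph equals the no-degree-limit $RobustPruning$ graph.

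The key structural claim is: after slow preprocessing, for every $j$ the only out-edge of $p_j$ pointing \emph{toward} $q$ (that is, to some $p_i$ with $i<j$) is $(p_j,p_{j-1})$. In $RobustPruning(p_j,V,\alpha)$ the globally closest point to $p_j$ is $p_{j-1}$ --- comparing $D(p_j,p_{j-1})=\alpha^{\,j-1}(\alpha-1)$ against $D(p_j,p_{j+m})$ and against $D(p_j,p_{j-l})$ for $l\ge2$ --- so $p_{j-1}$ is processed first and is never deleted. For each $l\ge2$ it then deletes $p_{j-l}$, because
\[
\alpha\cdot D(p_{j-1},p_{j-l})=\alpha\!\left(\alpha^{\,j-1}-\alpha^{\,j-l}\right)=\alpha^{\,j}-\alpha^{\,j-l+1}<\alpha^{\,j}-\alpha^{\,j-l}=D(p_j,p_{j-l}),
\]
an inequality that holds for every $\alpha>1$. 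No point $p_{j+m}$ with $m\ge1$ can delete any $p_{j-l}$, since it is strictly farther from $p_{j-l}$ than $p_j$ is; hence all of $p_1,\dots,p_{j-2}$ are removed (by $p_{j-1}$) before any other vertex is processed, and $p_{j-1}$ remains the only backward out-neighbor of $p_j$. (The orientation is essential: the pruning rule is weak in one direction along the line and strong in the other, and we have put $q$ at the end toward which it is strong; with $q$ placed \emph{outside} the point set the analogous forward pruning fails once $\alpha\ge2$ and shortcuts toward $q$ appear.)

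To conclude, run $GreedySearch(s,q,L)$ as DiskANN does, taking $s=p_n$ --- or $s$ the medoid, which for this geometric point set one checks has index $\Omega(n)$. Since $q$ lies to the left of all of $P$, best-first search always prefers a backward neighbor; by the claim the only backward out-neighbor of a scanned vertex $p_j$ is $p_{j-1}$, and every forward out-neighbor of $p_j$ lies in $\{p_{j+1},\dots,p_n\}$ and has therefore already been scanned. So the scanned vertices are exactly $p_n,p_{n-1},p_{n-2},\dots$, one per step, independently of the queue length $L$, and the algorithm scans at least $n-O(1)=\Omega(n)=\Omega(\log\Delta)$ vertices before it first reaches some $p_{j_0}$ with $j_0=O(1)$, i.e.\ the first $O(1)$-approximate nearest neighbor of $q$. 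Up to the $\alpha$-dependent constant this matches the $O(\log_\alpha\Delta)$ upper bound of Theorem~\ref{thm:alpha_reachable_ub_out}.

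The main obstacle is the structural claim, specifically establishing it for \emph{all} $\alpha>1$ simultaneously: one has to pin down the $RobustPruning$ processing order (so that $p_{j-1}$, and nothing else, performs every backward deletion) and orient the instance so that the weak direction of pruning points away from $q$. Once that is settled, the aspect-ratio computation, the identification $\Omega(n)=\Omega(\log\Delta)$, and the robustness to the queue size $L$ and to the choice of start vertex are all routine.
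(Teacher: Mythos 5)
Your proposal is correct, and its core is the same structural fact the paper uses: on a geometric progression, $RobustPruning$ with parameter $\alpha$ leaves each point with exactly one out-edge in the direction of the query (its immediate predecessor), so greedy search is forced into a vertex-by-vertex walk of length $\Omega(n)=\Omega(\log_\alpha\Delta)$, independently of the queue length $L$; your deletion inequality and the observation that farther points cannot perform backward deletions match the paper's Lemma~\ref{lm:line_property}(2). The genuine difference is how the start vertex is handled. The paper builds a \emph{symmetric}, two-sided geometric instance (points accumulating toward both endpoints, glued in the middle) and proves an extra property (its Lemma~\ref{lm:line_property}(1): every right-half vertex's only left-going edge lands on the middle vertex), which lets it choose the query adversarially as whichever endpoint is farther from the start; the resulting lower bound therefore holds for an \emph{arbitrary} start vertex, nicely complementing Theorem~\ref{thm:alpha_reachable_ub_out}, which is a guarantee for any start. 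You instead use a one-sided sequence $p_j=\alpha^j$ with $q=0$ and pin down DiskANN's actual start rule, arguing the centroid-closest point has index $n-\Theta(\log_\alpha n)=\Omega(n)$; this is correct (and your medoid estimate checks out), it is simpler, and it gives $\Delta=\Theta(\alpha^n)$ directly, but it is tied to the medoid-based start selection: if greedy search were started near $p_1$ (as the theorem's companion upper bound permits), your instance would not be hard, which is exactly the failure mode the paper's symmetrization is designed to rule out. Since the theorem as stated refers to the DiskANN algorithm, whose search starts at the centroid-closest vertex, your argument does establish the claim; your aside about the orientation of the pruning asymmetry (shortcuts appear in the forward direction, so $q$ must sit on the strongly pruned side) is also accurate and is implicitly why the paper's two-sided instance needs its property (1).
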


\subsection{A tight approximation lower bound for DiskANN}\label{sec:tight_lb_diskann}
In Theorem~\ref{thm:alpha_reachable_ub_out}, we know that the slow preprocessing version of DiskANN algorithm can asymptotically get to an $\frac{\alpha+1}{\alpha-1}$-approximate nearest neighbor. Here, we provide a simple instance showing that this approximation ratio is tight.

\begin{figure}[!ht]
\begin{minipage}{0.4\textwidth}
\centering
\includegraphics[width=.95\textwidth]{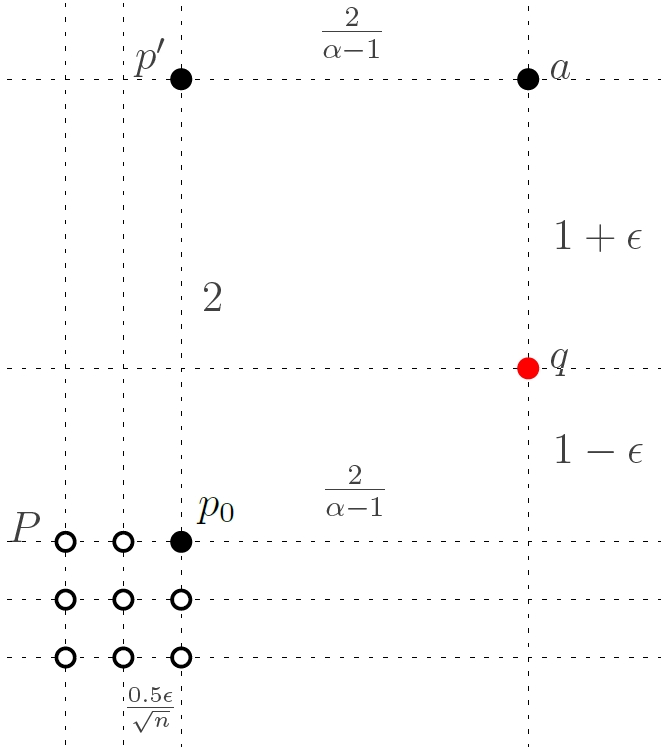}
\end{minipage}
\begin{minipage}{0.6\textwidth}
\centering
\captionof{figure}{$\frac{\alpha+1}{\alpha-1}$-approximation ratio lower bound instance for slow preprocessing version of DiskANN. {The whole instance can be embedded in a two-dimensional grid using $l_1$ distance, and therefore has a constant doubling dimension. Black dots (solid or hollow) are points in the database, the red dot is the query point. {\bf Grids are only used to show the layout structure.} Please refer to Section~\ref{sec:tight_lb_diskann} for detailed instance description.}}
\label{fig:alpha_thm_instance}
\end{minipage}
\end{figure}

\begin{theorem}\label{thm:alpha_lb_thm}
For any  $\alpha>1$, there exists a set of $n+2$ points in the two dimensional plane under $l_1$ distance where the slow preprocessing version of DiskANN will scan at least $n$ vertices before getting to a vertex with an approximation ratio smaller than $\frac{\alpha+1}{\alpha-1}$-approximate nearest neighbor.
\end{theorem}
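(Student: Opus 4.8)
\emph{Overview of the construction.} The plan is to build a ``geometric chain'' of $n$ database points whose $RobustPruning$ graph, in the direction of $q$, is a single path $p_1\to p_2\to\dots\to p_n$, so that greedy search is forced to walk along it one vertex at a time, while arranging that every $p_i$ is no better than a $\frac{\alpha+1}{\alpha-1}$-approximate neighbor of $q$ and that the true nearest neighbor is exposed only at the far end of the path. Concretely, place $q$ at the origin of a line, fix $\delta>0$, put the true nearest neighbor $a$ at distance $\delta$ from $q$, and for $i=1,\dots,n$ put $p_i$ on the same side of $q$ as $a$ at distance $d_i:=\alpha^{\,n-i}\cdot\frac{\alpha+1}{\alpha-1}\delta$, so that $\delta<d_n<d_{n-1}<\dots<d_1$ and $d_{i+1}=d_i/\alpha$. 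These $n+1$ database points together with $q$ form a set of $n+2$ points; the set lies on a line, hence has doubling dimension $1$, and embeds isometrically into the plane under the $\ell_1$ metric. Also $a$ is the unique nearest neighbor of $q$, since $d_i\ge\frac{\alpha+1}{\alpha-1}\delta>\delta$ for every $i$.

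\emph{The pruned graph is a path.} The key step is to analyze $RobustPruning(p_i,V,\alpha)$ and show that for $1\le i\le n-1$ the only out-neighbor of $p_i$ that is strictly closer to $q$ than $p_i$ is $p_{i+1}$, while $(p_n,a)\in E$. For $i\le n-1$, the vertex $p_{i+1}$ is the closest database point to $p_i$: its distance $d_i(\alpha-1)/\alpha$ to $p_i$ is smaller than $D(p_i,p_{i-1})=d_i(\alpha-1)$ and smaller than $D(p_i,a)=d_i-\delta$ (the latter because $d_i\ge d_{n-1}=\alpha\frac{\alpha+1}{\alpha-1}\delta>\alpha\delta$). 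Hence $p_{i+1}$ is the first vertex traversed and is never deleted; when it is processed it deletes every farther vertex $w$ with $\alpha\,D(p_{i+1},w)<D(p_i,w)$, and for each $w\in\{p_{i+2},\dots,p_n,a\}$, writing $t:=D(q,w)<d_{i+1}$, this inequality becomes $\alpha(d_{i+1}-t)<d_i-t$, i.e.\ $\alpha d_{i+1}-d_i<(\alpha-1)t$; since $\alpha d_{i+1}=d_i$ the left side is $0<(\alpha-1)t$, which holds. So every ``forward'' vertex other than $p_{i+1}$ (in particular $a$, when $i<n$) is removed. $RobustPruning$ may also create ``backward'' edges $p_i\to p_{i-1},p_{i-2},\dots$, but these point to vertices farther from $q$ than $p_i$ and play no role in greedy search. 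Finally, for $p_n$ the only database vertex closer to $q$ is $a$, and a one-line check ($\alpha D(p_j,a)>D(p_j,a)\ge D(p_n,a)$ for each $j<n$) shows $a$ is never deleted in $RobustPruning(p_n,V,\alpha)$, so $(p_n,a)\in E$.

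\emph{Tracing greedy search.} Running $GreedySearch(s,q,L)$ from $s=p_1$ then produces, for any queue length $L$, the sequence $p_1\to p_2\to\dots\to p_n\to a$: at each $p_i$ with $i<n$ the unique out-neighbor closer to $q$ than $p_i$ is $p_{i+1}$, so the best-first frontier advances along the chain and never backtracks (every other vertex discovered is farther from $q$), and each step strictly decreases the distance to $q$, so the search does not halt before reaching $p_n$. Consequently the first $n$ scanned vertices are $p_1,\dots,p_n$, each with approximation ratio $d_i/\delta\ge\frac{\alpha+1}{\alpha-1}$, and only the $(n+1)$-th scanned vertex, namely $a$, has ratio $1<\frac{\alpha+1}{\alpha-1}$; this is exactly the assertion of the theorem.

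\emph{Main obstacle.} The crux is ruling out ``forward shortcut'' edges --- an edge $p_i\to p_j$ with $j>i+1$, or $p_i\to a$ --- which would let greedy leap ahead or jump directly to the answer and destroy the bound. This is exactly what the \emph{exactly} geometric spacing $d_{i+1}=d_i/\alpha$ achieves: it collapses the governing pruning inequalities to $0<(\alpha-1)\cdot(\text{positive quantity})$, leaving no slack for a shortcut to survive, while still keeping $d_n=\frac{\alpha+1}{\alpha-1}\delta$ so that the last chain vertex sits exactly on the $\frac{\alpha+1}{\alpha-1}$ threshold. A secondary, more cosmetic issue is the start vertex: the statement is proved for the search started at $p_1$, and to make this agree with DiskANN's default (medoid) entry point one can rebalance the point set in the plane --- the second dimension being convenient there --- without affecting the pruning analysis above.
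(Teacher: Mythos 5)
Your chain construction and its pruning analysis are essentially correct as far as they go: with the exact geometric spacing $d_{i+1}=d_i/\alpha$, the vertex $p_{i+1}$ is indeed the first candidate processed in $RobustPruning(p_i,V,\alpha)$ and it deletes every forward vertex (including $a$), so the only out-edge of $p_i$ toward $q$ is $p_i\to p_{i+1}$, and greedy search started at $p_1$ walks the whole chain. However, there is a genuine gap in the step you dismiss as cosmetic: the start vertex. DiskANN (slow preprocessing) starts $GreedySearch$ at the point closest to the centroid, and in your instance that is \emph{not} $p_1$ but a vertex roughly $\Theta(\log_\alpha n)$ positions into the chain (the coordinates are geometric, so the centroid sits near $d_1\cdot\frac{\alpha}{(\alpha-1)n}$), which forces only $n-\Theta(\log_\alpha n)$ scans, short of the claimed $n$. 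Moreover, your proposed repair --- ``rebalance in the second dimension without affecting the pruning analysis'' --- cannot work: any embedding into the $\ell_1$ plane that preserves your chain metric is a monotone staircase, and for such an embedding the centroid's $\ell_1$ distance to the far endpoint $p_1$ is at least $s_1-\bar s$ while its distance to $a$ is exactly $\bar s$ (arc-length coordinates), and $\bar s\ll s_1/2$, so $p_1$ can never be the medoid; adding a far balancing point in the orthogonal direction shifts all on-chain distances to the centroid by the same amount and changes nothing. A correct fix exists (e.g., spend your one spare point on a balancing point placed far beyond $p_1$ \emph{along} the chain direction, chosen so the centroid lands at $p_1$, and check it causes no new prunings and is never scanned), but it must be stated and verified; alternatively one can do what the paper does and make the ``stuck'' region a dense cluster that provably contains the medoid.

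It is also worth noting that your route differs substantially from the paper's and proves a weaker statement in spirit, even though (after fixing the start vertex) it meets the literal wording. Your instance has aspect ratio $\Delta=\Theta(\alpha^n)$ and is essentially the geometric-line construction the paper uses for the $\Omega(\log\Delta)$ convergence lower bound (Theorem~\ref{thm:alpha_lb_delta}); on it, greedy search needs $\Omega(n)$ steps to reach \emph{any} $O(1)$-approximation, so it does not isolate the constant $\frac{\alpha+1}{\alpha-1}$, and it is consistent with the $O(\log_\alpha\Delta)$ upper bound of Theorem~\ref{thm:alpha_reachable_ub_out}. The paper's instance (Figure~\ref{fig:alpha_thm_instance}) instead uses a tiny-diameter $\sqrt{n}\times\sqrt{n}$ grid plus two outliers $p'$ and $a$, with polynomially bounded aspect ratio: every scanned grid vertex is already within ratio $\approx\frac{\alpha+1}{\alpha-1}$ of optimal, the medoid start lies in the grid by construction, and the search stays stuck at exactly that ratio for $n$ steps because the grid's only exit $p'$ is farther from $q$ than all grid points. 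That is what makes the theorem a tightness statement for the approximation factor rather than a slow-convergence statement driven by huge aspect ratio.
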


We draw the instance in Figure~\ref{fig:alpha_thm_instance}. The whole instance can be embedded in a two dimensional grid using $l_1$ distance. Note that in the figure, grids are drawn to label distances and highlight the layout structure. The vertex set $V$ consists of a point set $P$ and two single points $p'$ and $a$. The point set $P$ further consists of a $\sqrt{n}\times\sqrt{n}$ (we assume that $n$ is a perfect square) square grid with grid length $\frac{0.5\epsilon}{\sqrt{n}}$ where each grid point is associated with a point. We denote  the upper-rightmost  point in $P$ by $p_0$. Some important distances are $D(p_0,p')=2$, $D(p',a)=\frac{2}{\alpha-1}$, $D(p_0,a)=\frac{2\alpha}{\alpha-1}$, $D(a,q)=1+\epsilon$, $D(p_0,q)=\frac{2}{\alpha-1}+1-\epsilon$, where $\epsilon <0.01$.

\begin{lemma}\label{lm:property_alpha_thm_instance}
Some properties regarding the graph $G=(V,E)$ built on the instance in Figure~\ref{fig:alpha_thm_instance} using the slow preprocessing version of DiskANN:
\begin{enumerate}
    \item[(1)] For any $p\in P$, $(p,p')\in E$ and $(p,a)\notin E$
    \item[(2)] The subgraph of $G$ induced by point set $P\subseteq V$ is strongly connected.
    \item[(3)] The starting vertex $s$ is in $P$.
\end{enumerate}
\end{lemma}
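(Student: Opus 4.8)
The plan is to verify the three items one at a time, using three ingredients: a direct analysis of $RobustPruning(p,V,\alpha)$ on the instance of Figure~\ref{fig:alpha_thm_instance}, the $\alpha$-shortcut reachability guarantee of Lemma~\ref{lm:pruning_alpha_reachable}, and a short estimate for the medoid that DiskANN uses as its start vertex. First I would record the metric facts to be reused. The grid in $P$ has side length $(\sqrt n-1)\cdot\frac{0.5\epsilon}{\sqrt n}<0.5\epsilon$, so its $\ell_1$-diameter is below $\epsilon$; the corner $p_0$ is the unique point of $P$ closest to both $p'$ and $a$; and $p_0,p',a$ lie on one $\ell_1$-geodesic, so $D(p_0,a)=D(p_0,p')+D(p',a)=\alpha\,D(p',a)$. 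Hence for every $p\in P$ we have $D(p,p')\in[2,2+\epsilon)$ and $D(p,a)\in[\frac{2\alpha}{\alpha-1},\frac{2\alpha}{\alpha-1}+\epsilon)$, while all pairwise distances inside $P$ are $<\epsilon$; in particular (using $\epsilon<1$) the order of $V$ sorted by distance to $p$, which is the order scanned inside $RobustPruning(p,V,\alpha)$, is: all of $P\setminus\{p\}$ first, then $p'$, then $a$. I expect the one genuinely delicate part to be exactly this bookkeeping for item (1): it forces $\epsilon$ to be chosen small relative to $\alpha-1$, and for the corner point $p_0$ the inequality that kills the edge $(p_0,a)$ is tight, so the lattice offsets must be fixed so that no point of $P$ sits at distance \emph{exactly} $\frac{2\alpha}{\alpha-1}$ from $a$.

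For item (1), with the scan order above, $p'$ can be removed only by some $w\in P$ scanned before it, which would require $\alpha\,D(w,p')<D(p,p')$; but $D(w,p')\ge D(p,p')-D(p,w)>D(p,p')-\epsilon$, and $\alpha(D(p,p')-\epsilon)\ge D(p,p')$ whenever $\epsilon\le\frac{(\alpha-1)D(p,p')}{\alpha}$, which holds since $D(p,p')\ge 2-\epsilon\ge 1$ and $\epsilon$ is chosen at most $\frac{\alpha-1}{\alpha}$. So $p'$ survives and $(p,p')\in E$. When $p'$ is then scanned, the still-present vertex $a$ is removed, because $\alpha\,D(p',a)=\frac{2\alpha}{\alpha-1}=D(p_0,a)<D(p,a)$, the last inequality being strict by the placement of the grid; hence $(p,a)\notin E$. (If $a$ happens to be removed even earlier by some $w\in P$, that only helps.)

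For item (2), fix $p,t\in P$. By Lemma~\ref{lm:pruning_alpha_reachable} the graph on $V$ is $\alpha$-shortcut reachable, so either $(p,t)\in E$ — an edge internal to $P$, and we are done — or there is $v$ with $(p,v)\in E$ and $D(v,t)\le D(p,t)/\alpha$. Such a $v$ must itself lie in $P$: since $D(v,t)\le D(p,t)/\alpha\le\epsilon/\alpha<\epsilon$, whereas $D(p',t)\ge 2-\epsilon$ and $D(a,t)\ge\frac{2\alpha}{\alpha-1}-\epsilon$ are much larger, $v$ can be neither $p'$ nor $a$. Iterating from $v$ towards $t$, we obtain a walk $p=v_0,v_1,v_2,\dots$ using only edges with both endpoints in $P$, along which $D(v_i,t)$ drops by a factor $\alpha$ at every step that is not already an edge to $t$. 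Since distances inside $P$ are bounded below by $D_{min}>0$, this contraction can occur only finitely often, so after finitely many steps the walk reaches $t$ (either because $(v_i,t)\in E$, or because $v_i=t$). As $p,t\in P$ were arbitrary, the subgraph of $G$ induced by $P$ is strongly connected.

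For item (3), recall that DiskANN begins greedy search at the (approximate) medoid $s$ of the point set, a vertex minimizing $\sum_{v\in V}D(s,v)$. For $s\in P$ this sum is at most $n\epsilon+D(s,p')+D(s,a)=O(n\epsilon)+O(1/(\alpha-1))$, whereas for $s=p'$ it is at least $2n$ and for $s=a$ at least $\frac{2\alpha}{\alpha-1}n$. Since $\epsilon<0.01<2$, for all $n$ larger than a constant depending only on $\alpha$ the minimizer is forced to lie in $P$, so $s\in P$; the same estimate rules out $p'$ and $a$ for the concrete approximate-medoid rule used in the implementation. This completes the plan; the remaining work is the routine distance computations flagged in the first paragraph.
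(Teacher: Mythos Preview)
Your argument is essentially correct, but item~(2) takes a genuinely different route from the paper. You invoke Lemma~\ref{lm:pruning_alpha_reachable} and observe that any shortcut vertex $v$ with $D(v,t)\le D(p,t)/\alpha<\epsilon$ must already lie in $P$ (since $p'$ and $a$ are at distance $\ge 2-\epsilon$ from every point of $P$), then iterate to obtain a contracting walk that stays inside $P$. The paper instead argues directly that, because $P$ is a uniform $\ell_1$ grid, each grid point's axis-adjacent neighbors are its closest vertices and therefore survive $RobustPruning$; the induced subgraph thus contains the grid graph and is strongly connected. The paper's argument is shorter and more elementary for this specific instance; yours is more robust and would go through for any sufficiently tight cluster, not just a grid. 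Your treatment of item~(1) is in fact more careful than the paper's, which neither spells out why $p'$ survives pruning by the $P$-vertices scanned before it nor addresses the borderline equality $\alpha\,D(p',a)=D(p_0,a)$ at the corner $p_0$; your remark that this forces either the non-strict pruning rule of the pseudocode or a slight perturbation of the grid is well taken. One small correction to item~(3): DiskANN's start vertex is the point closest to the \emph{centroid} (arithmetic mean) of the data, not the medoid minimizing $\sum_v D(s,v)$. Since $n$ of the $n{+}2$ points lie in a set of diameter $<\epsilon$ while $p',a$ are only two outliers at distance $O(1/(\alpha-1))$, the centroid is within $O\big(\tfrac{1}{n(\alpha-1)}\big)$ of the grid and its nearest vertex is in $P$, so your conclusion stands with a one-line adjustment.
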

\vspace{-0.2cm}
\begin{proof}
(1): According to the $l_1$ distance, we can see that for any point $p\in P$, $D(p,p')>D(p_0,p')=2$ and $D(p,a)>D(p_0,a)=2+\frac{2}{\alpha-1}=\frac{2\alpha}{\alpha-1}$, and $D(p',a)=\frac{2}{\alpha-1}$. Therefore, in the procedure $RobustPruning(p,V,\alpha)$, the edge $(p,p')$ remains and edge $(p,a)$ is pruned.

(2): Because the point set $P$ is a uniform grid, we know that for each vertex $p$, the distances between $p$ and its four adjacent vertices in the grid are smaller than to all other vertices and must remain after $RobustPruning(p,V,\alpha)$. Therefore the subgraph induced by the vertex set $P$ in the final graph $G$ is strongly connected.

(3): In the implementation of DiskANN, the starting point is the closest vertex to the centroid of the whole vertex set, which lies in $P$. 
\end{proof}
\begin{proof}[Proof of Theorem~\ref{thm:alpha_lb_thm}]
Now we analyze the behavior of running $GreedySearch(s,q,L)$ on the instance above on the graph constructed by DiskANN's slow version. We will show that the first $n$ vertices entering the queue are all vertices in the set $P$. Additionally, the only neighbor of set $P$, $p'$, is further from $q$ than any vertex in the set $P$. If $L\le n$, $GreedySearch(s,q,L)$ terminates before getting to $p'$ or $a$. Consequently, the nearest vertex returned by $GreedySearch(s,q,L)$ has distance at least $\frac{2}{\alpha-1}+1-\epsilon$ from $q$, which is not a $\frac{\alpha+1}{\alpha-1}$-approximate nearest neighbor when $\epsilon$ approaches $0$.

The first vertex in the queue is $s\in P$. According to property (1) in Lemma~\ref{lm:property_alpha_thm_instance}, no vertex in $P$ is connected to $a$, any vertex $p\in P$ has distance $D(p,q)<D(p',q)$ thus has higher priority to be scanned than $p'$, and the subset $P$ is strongly connected, so in the first $n$ steps (recall that $|P|=n$), $GreedySearch(s,q,L)$ will always scan vertices in $P$.
\end{proof}

\section{Experiments}\label{sec:experiment}

In our experiments, we test the performance of DiskANN, NSG, and HNSW, on two of our constructed hard instances. We run each algorithm on 20 different data sizes $n\in \{10^5, 2\cdot 10^5, \ldots, 2 \cdot 10^6\}$. Each data set consists of $n$ points in the two-dimensional plane.  We plot Recall@5 rates (Figures~\ref{fig:exp_diskann} and  \ref{fig:exp_nsg_hnsw}) and approximation ratios (Figure~\ref{fig:exp_approx_ratio}) for answering the query with queue length $L$ equal to $p n$ where the percentage $p$ is enumerated from the set $1\%, 2\%, \ldots,12\%, 15\%, 18\%, 20\%, 30\%, 40\%, 50\%$.
Note that for all graph-based nearest neighbor search algorithms, the value  $L$ lower bounds the number of vertices scanned (and therefore the running time) of the algorithm.

Our results show that, unlike on standard benchmarks,  the algorithms scan at least $10\%$ of the points in our instances before finding quality nearest neighbors. 
All experiments were ran on Google Cloud. Since our experiments only use combinatorial measures of algorithm complexity (i.e., the number of vertices searched), the results do not depend on the exact details of the computer architecture.

\subsection{Hard instance for DiskANN}\label{sec:hard_instance_random}

Though we provide $(\frac{\alpha+1}{\alpha-1})$-approximate ratio upper bound for {\em slow} preprocessing version of DiskANN algorithm, in the experiments, we show that there exists a family of instances where the DiskANN algorithm with {\em fast} preprocessing  cannot reach any top $5$ nearest neighbor 
before scanning $10\%$ of the vertices. We test our constructed instance using the authors' DiskANN code (fast preprocessing) on GitHub \cite{DiskANN} . {For a comparison, we also test our constructed instance on our implementation of the slow preprocessing DiskANN algorithm.}

\begin{figure}[!ht]
\begin{minipage}{0.55\textwidth}
\centering
\includegraphics[width=0.95\textwidth]{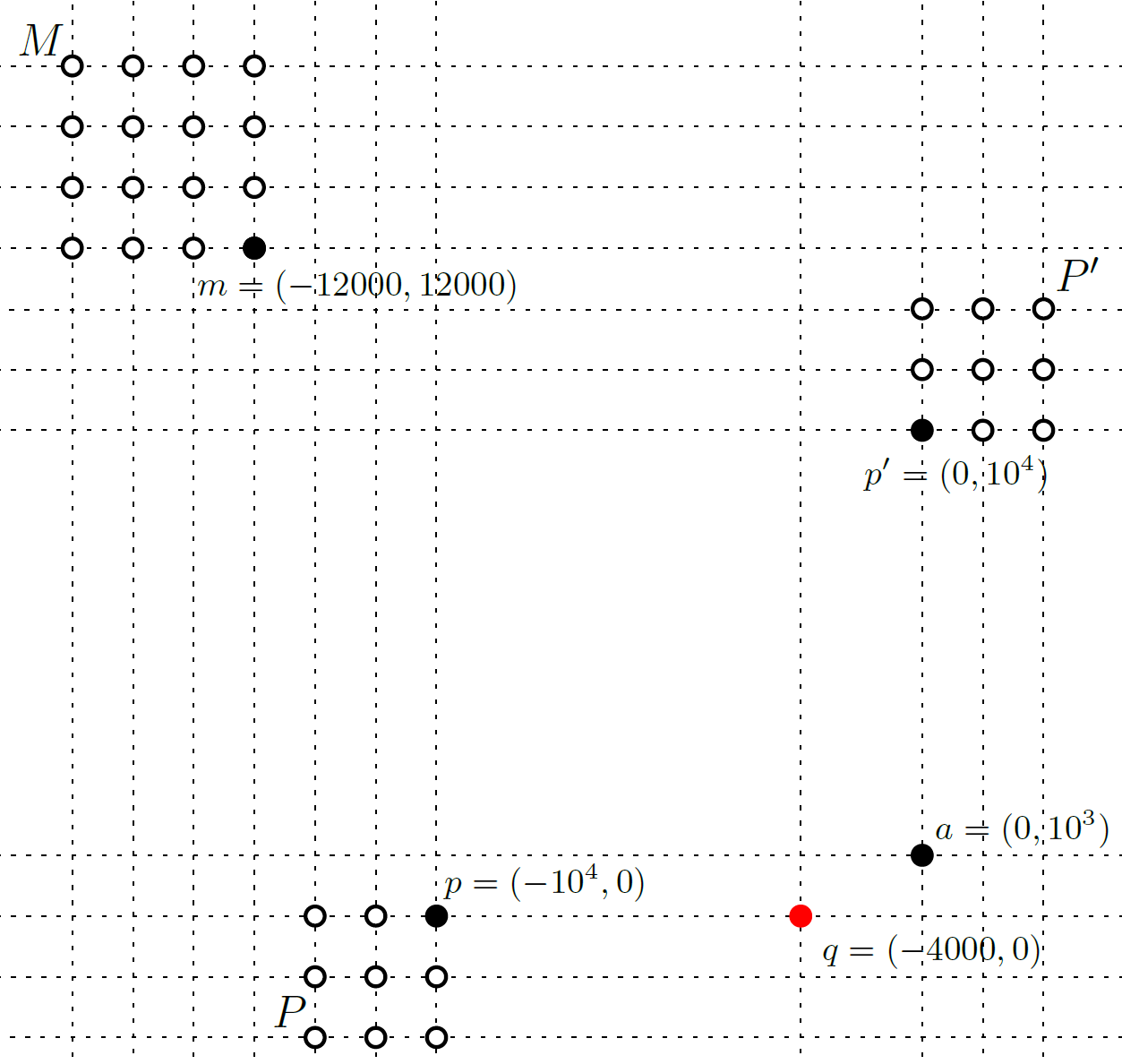}
\end{minipage}
\begin{minipage}{0.45\textwidth}
\centering
\captionof{figure}{Our  constructed hard instance for DiskANN for $n=10^6$. The instance lives in a two-dimensional Euclidean space, and therefore has a constant doubling dimension. Black dots (solid or hollow) are points in the database, the red dot is the query point. {\bf Grids are only used to show the layout structure.} See section~\ref{sec:hard_instance_random} for detailed description and supplementary materials for its implementation.}
\label{fig:hard_random_init}
\end{minipage}
\end{figure}

In Figure~\ref{fig:hard_random_init}, we draw our constructed instance for the DiskANN algorithm for  $n=10^6$. It is easy to mimic this instance for other data sizes. 
\begin{figure}[!ht]
\centering
\includegraphics[width=0.48\textwidth]{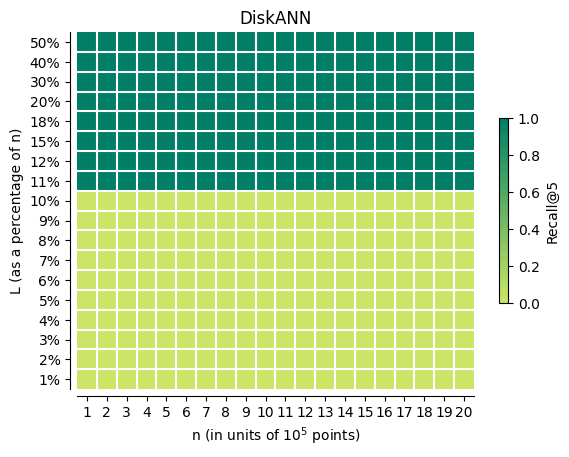}
\includegraphics[width=0.48\textwidth]{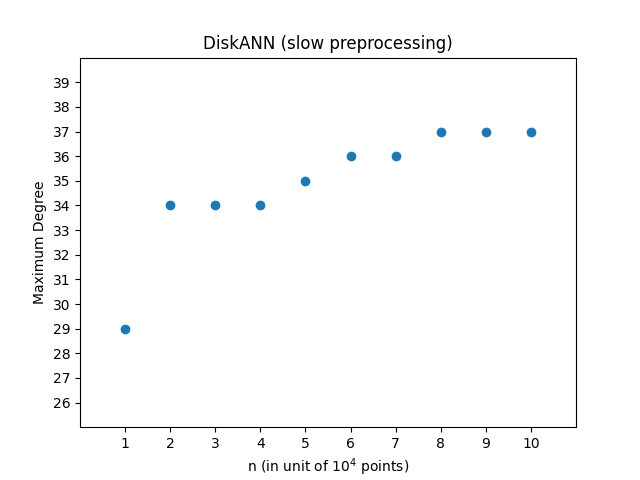}
\caption{Results for both variants of the DiskANN algorithm  on the instance in Figure~\ref{fig:hard_random_init}. {{\bf Left figure (fast preprocessing DiskANN):} the horizontal axis represents the data size $n$, in multiples of $10^5$ points. The vertical axis represents the size of the search queue length $L$ in terms of the percentage of the data size. Each pixel represents the average value of $Recall@5\in [0,1]$ over 10 runs of the algorithm. Note the sharp transation of the recall (from $0$ to $1$) for $L \approx 10\%$. {\bf Right figure (slow preprocessing DiskANN):} as indicated by Theorem~\ref{thm:alpha_reachable_ub_out}, slow preprocessing DiskANN constructs a very efficient data structure. In the experiment, it finds the exact nearest neighbor {\bf in at most 2 steps} of greedy search, with $L=1$. In the figure we plot the maximum degree of the constructed graph (on the vertical axis) vs.  the data size $n$ in multiples of $10^4$ points (on the horizontal axis). {
The plot ends at $n=10^5$, as the slow preprocessing algorithm takes several hundred hours for larger values of $n$. }}}
\vspace{-0.3cm}
\label{fig:exp_diskann}
\end{figure}
The parameters used in our experiment are $R=70$, $L=125$ (suggested in the original DiskANN paper~\cite{jayaram2019diskann}, section 4.1, for in-memory search experiments), $\alpha=2$, $num\_threads=1$.

In the left plot of Figure~\ref{fig:exp_diskann}, we can see that in most cases, DiskANN (fast preprocessing version) cannot achieve non-zero Recall@5 unless scanning at least $10\%$ of the vertex set. 
{In the right plot of Figure~\ref{fig:exp_diskann}, we can see that DiskANN (slow preprocessing version) generates very sparse graph data structure given our hard instance. Furthermore, after preprocessing, greedy search is very efficient: it finds the exact nearest neighbor in {\bf only 2 steps!} }
Overall, it can be seen that, on the hard instances proposed in this paper, the greedy search procedure for query answering is highly efficient after slow preprocessing, and slow (linear time) after fast preprocessing.

\paragraph{Description of instance in Figure~\ref{fig:hard_random_init}} The instance lives in a 2-dimensional plane under the $l_2$ distance, so in what follows, we give the coordinates for each point, which defines the distances between all points. In Figure~\ref{fig:hard_random_init}, $n=10^6$, the vertex set $V$ consists of three sets of points $M,P$ and $P'$, of sizes $0.8n,0.1n,0.1n$ respectively, and a single answer point $a$. Let  $l=0.01*n=10^4$. $M$ is a $\sqrt{|M|}\times\sqrt{|M|}$ square grid with grid side length 1 whose bottom-right corner is $m=(-1.2l,1.2l)$. $P$ is a $\sqrt{|P|}\times\sqrt{|P|}$ square grid with grid side length 1 whose upper-right corner is $p=(-l,0)$. $P'$ is a $\sqrt{|P'|}\times\sqrt{|P'|}$ square grid with grid side length 1 and whose bottom-left corner is $p=(0,l)$. (We assume that  $|M|,|P|,|P'|$ are perfect squares.) The query point is $q=(-0.4l,0)$, whose nearest neighbor is $a=(0,0.1l)$. Since our experiments measure  $Recall@5$, we add another 4 points very close to $a$, which is not shown in the figure for simplicity.

\paragraph{Intuition} First, the start point $s$ should lie in the point set $M$. The three key properties we want to maintain in the construction of the graph are that (1) $s$ is always connected to at least one vertex in the point sets $P$ and $P'$. (2) Except for the points randomly connected to $a$ at initialization, only the points in the set $P'$ can still have edges to $a$ at the end of the construction. (3) In the final graph, at least $L$ vertices in set $P$ are reachable from start point $s$ without passing through any vertex not in $P$. If  these three properties hold, $GreedySearch(s,q,L)$  scans only the vertices in  $P$ (except for the start point $s$) until it finds a vertex $p_i\in P$ which is randomly connected to $a$ at initialization. In expectation, this requires scanning $\Omega(n/R)$ vertices, and DiskANN won't reach the nearest neighbor $a$ before that.   
The actual running time of the code is even slower.

\subsection{Hard instance for NSG and HNSW}\label{sec:hard_instance_knn}

\begin{figure}[!ht]
\begin{minipage}{0.55\textwidth}
\centering
\includegraphics[width=0.95\textwidth]{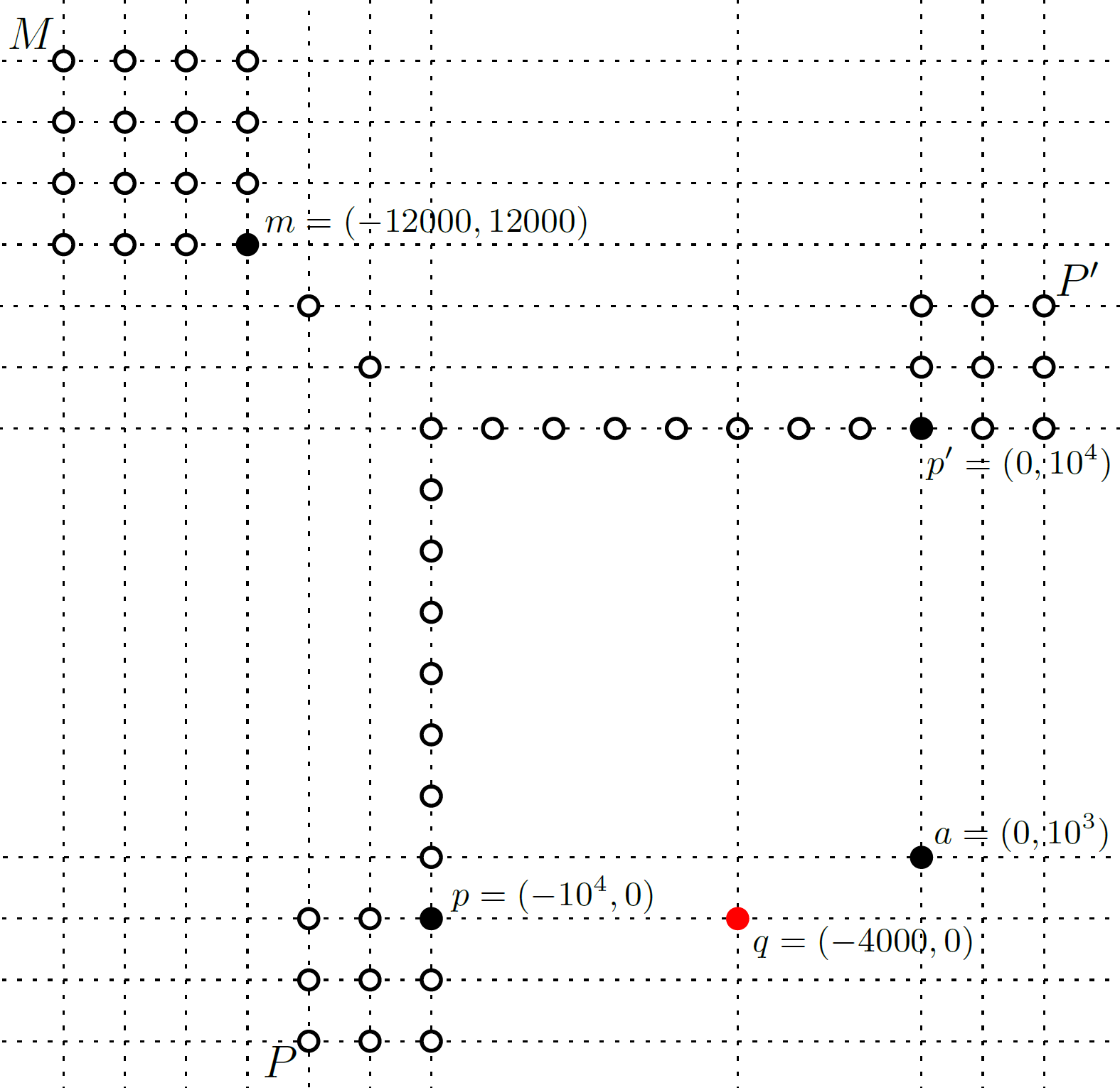}
\end{minipage}
\begin{minipage}{0.45\textwidth}
\centering
\captionof{figure}{Our constructed hard instance for NSG and HNSW algorithms for data size $n=10^6$. The instance lives in a two-dimensional Euclidean space, and therefore has a constant doubling dimension. Black dots (solid or hollow) are points in the database, the red dot is the query point. {\bf Grids are only used to show the layout structure.} See Section~\ref{sec:hard_instance_knn} for a detailed description and supplementary materials for its implementation.}
\label{fig:hard_knn_init}
\end{minipage}
\end{figure}

A variant of the hard instance from the previous section works for NSG \cite{fu2019fast} and HNSW \cite{malkov2018efficient} algorithms as well. In Figure~\ref{fig:hard_knn_init}, we draw this instance for $n=10^6$. It is easy to mimic this instance for other data sizes. 
\begin{figure}[!ht]
\vspace{-0.2cm}
\centering
\includegraphics[width=0.48\textwidth]{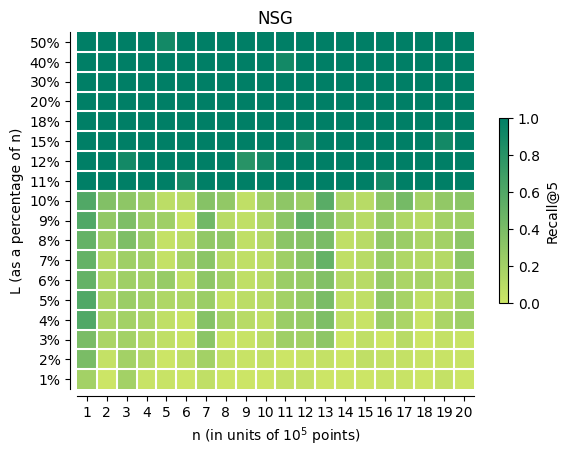}
\includegraphics[width=0.48\textwidth]{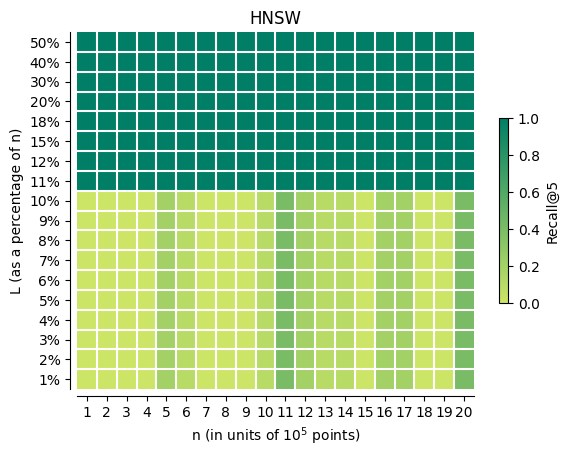}
\caption{Results for running NSG and HNSW algorithm on instance in Figure~\ref{fig:hard_knn_init}. The horizontal axis represents the data size $n$, in multiples of $10^5$ points. The vertical axis represents the size of the search queue length $L$ in terms of the percentage of the data size. Each pixel represents the average value of $Recall@5\in [0,1]$ over 10 runs of the algorithm. Since the algorithm is randomized, each run generates and uses a different random seed.}
\vspace{-0.6cm}
\label{fig:exp_nsg_hnsw}
\end{figure}

The parameters of NSG are as follows. We set $K=400$, $L=400$, $iter=12$, $S=15$, $R=100$ for EFANNA~\cite{efanna2016paper} to construct the KNN graph. We set $L=60$, $R=70$, $C=500$ for constructing NSG from KNN graph. These parameters were used by the authors when testing on GIST1M dataset, whose size is similar to ours. %
The parameters of HNSW are as follows. $ef_{construction}=200$ (as used in their example code), $M=64$ (maximal degree limit in their suggested range), $num\_threads=1$.
We test our constructed instance using the authors' code available at \cite{HNSW}  (for HNSW) and at \cite{NSG} (for NSG). We note that both implementations are randomized, but their random seeds are hardwired into the code. To obtain more informative results, we generate and use different random seeds for each algorithm execution, and plot the average recall. %

In Figure~\ref{fig:exp_nsg_hnsw}, we can see that, for most values of $n$, both NSG and HNSW algorithms cannot achieve good Recall@5 unless scanning at least $10\%$ of the vertex set.

\paragraph{Description of instance in Figure~\ref{fig:hard_knn_init}} This instance is based on the last instance in Figure~\ref{fig:hard_random_init}. The main difference is that we use a few chains of points to connect the three subsets $M$, $P$, and $P'$. Specifically, on the dotted line, we add points uniformly spaced out, separated by a distance of 5 in the horizontal and/or vertical directions. 
For $n=10^6$ and $l=0.01n=10^4$, the instance consists of $400 (0.04\%)$ points on the diagonal (from $m=(-1.2l,-1.2l)$ to $(-l,-l)$),  $2000 (0.2\%)$ on the horizontal line (from $(-l,l)$ to $p'=(0,l)$) and  $2000 (0.2\%)$ points on the vertical line  (from $(-l,l)$ to $p=(-l,0)$). In total, there are $4400$ new added points on the chain compared with the previous instance, occupying $0.44\%$ of the vertex set.

\paragraph{Intuition for the instance in Figure~\ref{fig:hard_knn_init}} The new added vertex chains  are there to make KNN graph connected for those nearest neighbor algorithms using KNN as part of their constructions. Thanks to these chains, most of the vertices in subset $P$ and $P'$ are reachable from the start point. The construction algorithm will only add edges from the vertices in the set $P'$ to $a$, but not from $P$. Then, $GreedySearch$ on query $q$ will first traverse the whole subset $P$ before going to $P'$. Therefore, $GreedySearch$ cannot get to the nearest neighbor if the queue length limit $L$ is smaller than $|P|$.

\subsection{Cross-comparisons} We also evaluated DiskANN on hard examples for NSG and HNSW, and vice versa.  For $n=10^6$ the results are similar to the ones reported in earlier sections: none of the algorithms can achieve non-zero recall unless $L$ exceeds $10\%$ of the data set size. We did not experiment with other values of $n$, as  two different families of instances are anyway helpful for other algorithms, as outlined in the next section and described in detail in supplementary material.

\subsection{Evaluating approximation ratio of the algorithms}
In addition to $Recall@5$, we also measure the {\em average approximation ratio} of DiskANN (fast preprocessing), NSG and HNSW on slightly modified  hard instances from Figure~\ref{fig:hard_random_init} and \ref{fig:hard_knn_init}. The modifications only involve scaling of the entire instance and bringing points $a$ and $q$ closer, to amplify the approximate ratio for other points. As per Figure~\ref{fig:exp_approx_ratio}, the average ratio is very high unless $L \ge 0.1n$. Note that we do not include the plot for DiskANN (slow preprocessing) here because, as shown in Figure~\ref{fig:exp_diskann}, that algorithm (with $L=1$) can find the exact nearest neighbor in two steps. 

\begin{figure}[!ht]
\centering
\includegraphics[width=0.32\textwidth]{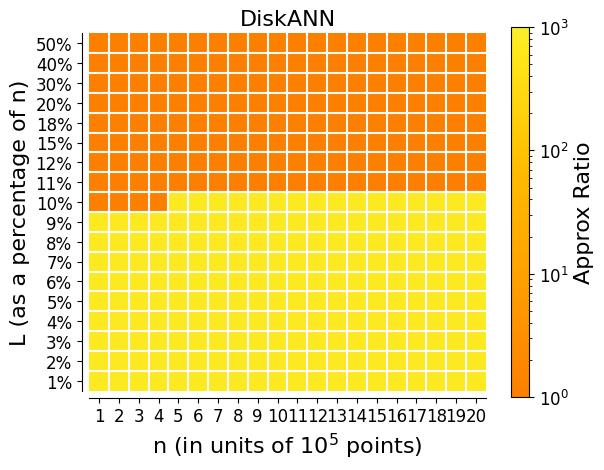}
\includegraphics[width=0.32\textwidth]{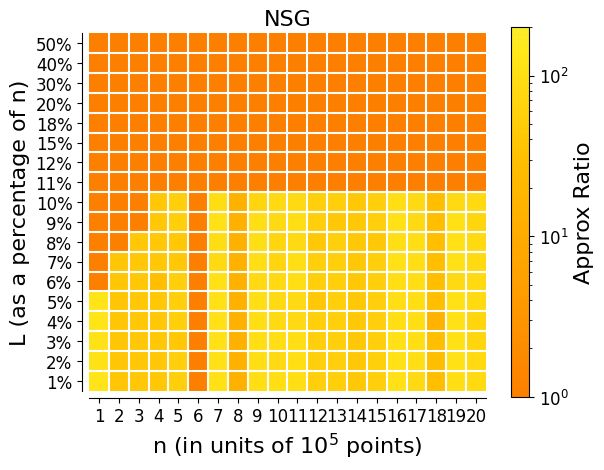}
\includegraphics[width=0.32\textwidth]{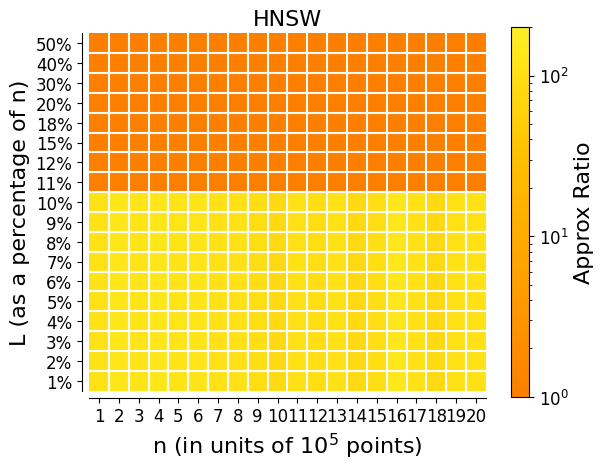}
\caption{Average approximation ratio results for running DiskANN (fast preprocessing), NSG, and HNSW algorithm on instances in Figure~\ref{fig:hard_random_init} and ~\ref{fig:hard_knn_init}. The horizontal axis depicts the data size $n$, in multiples of $10^5$. The vertical axis depicts the size $L$ of the search queue, as a ratio to $n$. Each pixel represents the average approximation ratio over 10 runs of the algorithm. Since the algorithms are randomized, each run generates and uses a different random seed.}
\label{fig:exp_approx_ratio}
\vspace{-0.3cm}
\end{figure}

\subsection{Experiments on other popular approximate nearest neighbor search algorithms}
We also tested other popular approximate nearest neighbor search algorithms covered in the survey \cite{wang2021comprehensive}:  NGT \cite{iwasaki2018optimization}, SSG \cite{ssg2022paper}, KGraph \cite{wang2021comprehensive}, DPG \cite{li2019approximate}, NSW \cite{nsw2014paper}, SPTAG-KDT \cite{spann2021sptagpaper} and EFANNA \cite{efanna2016paper}. We ran them on our hard instances, with $n$ ranging over $10^5, 2 \cdot 10^5 \ldots 2 \cdot  10^6$ and  the queue length $L = 0.1 n$. The table depicts average Recall@5 rates over all values of $n$ and 5 or 10 repetitions. One can see that all algorithms achieve sub-optimal recall.
See the appendix for more details.

\begin{table}[!ht]
\centering
\begin{tabular}{|c|c|c|c|c|c|c|c|c|c|}
\hline
DiskANN &NSG    &HNSW   &NGT    &SSG    &KGraph &DPG    &NSW    &SPTAG-KDT  &EFANNA  \\
\hline
0.0     & 0.27  &0.1    &0.05   &0.16   &0.42   &0.37   &0.02   &0.02       &0.12\\    
\hline
\end{tabular}
\caption{Average Recall@5 for the 10 algorithms surveyed in \cite{wang2021comprehensive}, on our constructed instances.}
\end{table}

\vspace{-0.5cm}

\section{Conclusions}

In this paper we study the worst-case performance of popular graph-based nearest neighbor search algorithms. We demonstrate empirically that almost all of them suffer from linear query times on carefully constructed instances, despite being fast on benchmark data sets~\cite{wang2021comprehensive}. The exception is DiskANN with slow-preprocessing, for which we bound the approximation ratio and the running time. However, its super-linear preprocessing time makes it difficult to use for large data sets. 

An important question raised by our work is whether there is a fast preprocessing algorithm, and  a query answering procedure, that are {\em empirically fast} (e.g., as fast as for DiskANN) while having {\em  worst case} query time and approximation guarantees. Another interesting direction is to investigate whether it is possible to replicate our findings for {\em real} data sets,  using adversarially selected queries. 

\paragraph{Acknowledgement:} This work was supported by the Jacobs Presidential Fellowship, the NSF TRIPODS program (award DMS-2022448),   Simons Investigator Award and MIT-IBM Watson AI Lab.

\bibliographystyle{plain}
\bibliography{neurips_2023}

\newpage
\appendix
\onecolumn

\section{DiskANN Algorithm}
\label{s:appendix-diskann}

\begin{figure}[ht]
  \centering
  \subfloat{%
    \begin{minipage}{0.48\linewidth}
      \begin{algorithm}[H]
      \caption{RobustPruning$(i,U,\alpha,R)$}%
        \begin{algorithmic}[1]
            \STATE \textbf{Input} Vertex $i$, candidate neighbor set $U$, pruning parameter $\alpha$, degree limit $R$(default $R$ is $n$ if not given)
            
            \STATE \textbf{Result} Update $N_{out}(i)$, the set of out-neighbors of $i$
        
            \STATE $U\gets U\cup N_{out}(i)$
            \STATE $N_{out}(i) \gets\varnothing$
            
            \WHILE{$U\neq \varnothing$ and $|N_{out}(i)|<R$}
                \STATE $v\gets\argmin_{v\in U} D(x_v,x_i)$
                \STATE $N_{out}(i) \gets N_{out}(i)\cup v$
                \STATE $U\gets U\setminus v$
                \STATE $U \gets \{v'\in U: D(x_{v},x_{v'})\cdot\alpha> D(x_i,x_{v'})\}$
            \ENDWHILE
        \end{algorithmic}
      \end{algorithm}
    \end{minipage}
  }
  \hfill
  \subfloat{%
    \begin{minipage}{0.48\linewidth}
      \begin{algorithm}[H]
      \caption{GreedySearch$(s,q,L)$}%
        \begin{algorithmic}[1]
            \STATE \textbf{Input} Graph $G=(V,E)$, seed $s$, query point $q$, queue length limit $L$
            
            \STATE \textbf{Output} visited vertex list $U$
        
            \STATE $A\gets \{s\}$
            \STATE $U\gets \varnothing$
            
            \WHILE{$A\setminus U \neq\varnothing$}
                \STATE $v\gets\argmin_{v\in A\setminus U} D(x_v,q)$
                \STATE $A \gets A\cup N_{out}(v)$
                \STATE $U \gets U\cup v$
                \IF{$|A|> L$}
                    \STATE $A\gets \text{top $L$ closest vertices to $q$ in $A$}$
                \ENDIF
            \ENDWHILE
            \STATE sort $U$ in increasing distance from $q$
            \STATE \textbf{return} $U$
        \end{algorithmic}
      \end{algorithm}
    \end{minipage}
  }
  \label{fig:two_algorithms}
\end{figure}

\begin{algorithm}[!ht]
    \caption{DiskANN indexing algorithm (with fast preprocessing)}\label{alg:indexing_algorithm}
\begin{algorithmic}[1]
    \STATE \textbf{Input} Point set $P=\{x_1...x_n\}$, degree limit $R$, queue length $L$
    
    \STATE \textbf{Output} A proximity graph $G=(V,E)$ where $V=\{1..n\}$ are associated with point sets $P$.

    \STATE $G\gets$ randomly sample a $R$-regular graph on vertex set $V=\{1..n\}$
    
    \STATE $s\gets$ vertex for the point closest to the centroid of $P$
    
    \FOR{$k=1$ {\bfseries to} $2$}
        \STATE $\sigma\gets$ a random permutation of $[1...n]$\;
        \FOR{$i=1$ {\bfseries to} $n$}
            \STATE $U\gets GreedySearch(s,x_{\sigma(i)},L)$\;
            
            \STATE $RobustPruning(\sigma(i),U,\alpha, R)$

            \FOR{vertex $j$ in $N_{out}(\sigma(i))$}
                \STATE $N_{out}(j)\gets N_{out}(j)\cup \sigma(i)$
                \IF{$|N_{out}(j)|> R$}
                    \STATE $RobustPruning(j,N_{out}(j),\alpha, R)$
                \ENDIF
            \ENDFOR
        \ENDFOR
    \ENDFOR
\end{algorithmic}
\end{algorithm}

\begin{algorithm}[!ht]
    \caption{DiskANN indexing algorithm (with slow preprocessing)}\label{alg:slow_indexing_algorithm}
\begin{algorithmic}[1]
    \STATE \textbf{Input} Vertex set $P=\{x_1...x_n\}$, parameters: degree limit $R$
    
    \STATE \textbf{Output} A proximity graph $G=(V,E)$ where $V=\{1..n\}$ are associated with point sets $P$.
    
    \STATE $s\gets$ vertex for the point closest to the centroid of $P$
    \FOR{$i=1$ {\bfseries to} $n$}
        \STATE $N_{out}(i)\gets RobustPruning(i,V,\alpha, R)$
    \ENDFOR
\end{algorithmic}
\end{algorithm}

\section{Dependence on aspect ratio $\Delta$}
\label{s:Delta}

Consider the following 1-dimensional data set with $n=2k$ points located at $\{x_i\}_{i=1}^n$, where $$x_i=\begin{cases}\alpha^i &  \text{for } 1\le i\le k \\ 2\alpha^k+\alpha^k\beta-\alpha^{2k+1-i} &\text{for } k<i\le n\\ \end{cases}$$
and $\beta=\max(\frac{1}{\alpha-1},\alpha-1)$.
This is a symmetric line starting from $0$ to $(2+\beta)\alpha^k$. Each point's distance toward the closer endpoint is $\alpha$ times larger than that of the previous point. 

\begin{lemma}\label{lm:line_property}
The graph $G=(V,E)$ built on the above instance using the slow preprocessing version of DiskANN satisfies the following properties:
\begin{enumerate}
    \item[(1)] For any $i\in [k+1,n]$, $(i,k)\in E$ and $(i,j)\notin E$ for any $j<k$
    \item[(2)] For any $j<i\le k$, $(i,j)\in E$ if and only if $j=i-1$
\end{enumerate}
Since the $x_i$'s are symmetric, the same properties also hold in the other direction.
\end{lemma}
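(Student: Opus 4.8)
The plan is to analyze a single invocation $RobustPruning(i,V,\alpha)$, read off the edges it creates, and then obtain ``the same in the other direction'' for free from the reflection $\phi(x)=(2+\beta)\alpha^k-x$: this is an isometry of the line with $\phi(x_j)=x_{2k+1-j}$, so it maps $V$ onto itself and carries $RobustPruning$ from $i$ to $RobustPruning$ from $2k+1-i$. Only two elementary facts about $\beta=\max(\tfrac1{\alpha-1},\alpha-1)$ will be used, and they are exactly why $\beta$ is defined as a maximum: $\beta\ge 1$, and $(1+\beta)(\alpha-1)\ge\alpha$ (equivalently $\beta(\alpha-1)\ge 1$); both follow by a one-line case check for $\alpha\le 2$ and for $\alpha\ge 2$. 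The argument then has two parallel parts, each of which boils down to checking, for one ``anchor'' vertex, that it is scanned first (or before all competitors), survives pruning, and then kills everything we want killed.

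For part (2), I would first show that for $2\le i\le k$ the vertex $x_{i-1}$ is a nearest point of $x_i$ in all of $V$: inside the left cluster this is immediate since $D(x_i,x_{i-1})=\alpha^{i-1}(\alpha-1)$ is smaller than $D(x_i,x_j)=\alpha^i-\alpha^j$ for $j<i-1$ and smaller than $D(x_i,x_{i+1})=\alpha^i(\alpha-1)$, while the closest right-cluster point $x_{k+1}$ is at distance $(1+\beta)\alpha^k-\alpha^i\ge 2\alpha^k-\alpha^k=\alpha^k$, which exceeds $\alpha^{i-1}(\alpha-1)$ because $i\le k$. Hence $x_{i-1}$ is the first vertex added to $N_{out}(i)$, so $(i,i-1)\in E$; and at the moment it is added, the pruning test deletes every $j<i-1$ since $\alpha\,D(x_{i-1},x_j)=\alpha^i-\alpha^{j+1}<\alpha^i-\alpha^j=D(x_i,x_j)$. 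So among $\{1,\dots,i-1\}$ only $i-1$ ever becomes an out-neighbour, which is precisely (2).

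For part (1), fix $i\in[k+1,n]$, write $j_i=2k+1-i\in[1,k]$, so that $x_i=(2+\beta)\alpha^k-\alpha^{j_i}$ and $D(x_i,x_j)=(2+\beta)\alpha^k-\alpha^{j_i}-\alpha^j$ for $j\le k$. Step A: $x_k$ is added to $N_{out}(i)$. It is the closest point to $x_i$ among $\{1,\dots,k\}$ (the distance is decreasing in $\alpha^j$), so no other left-cluster vertex is scanned before it; every right-cluster vertex $\ne i$ lies within distance $\alpha^k-\alpha$ of $x_i$ whereas $D(x_i,x_k)=(1+\beta)\alpha^k-\alpha^{j_i}\ge\beta\alpha^k\ge\alpha^k$, so the right-cluster vertices are all scanned before $x_k$, and none of them prunes it, because $\alpha\,D(x_{i'},x_k)-D(x_i,x_k)=(1+\beta)\alpha^k(\alpha-1)-\alpha^{j_{i'}+1}+\alpha^{j_i}\ge(1+\beta)\alpha^k(\alpha-1)-\alpha^{k+1}+\alpha>0$ by $(1+\beta)(\alpha-1)\ge\alpha$. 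Step B: once $x_k$ is added it prunes every $j<k$, since a short computation using $\beta\ge\alpha-1$, $j_i\le k$ and $j\ge 1$ gives $\alpha\,D(x_k,x_j)-D(x_i,x_j)\le\alpha^k(\alpha-1-\beta)-\alpha(\alpha-1)<0$. As $x_k$ is scanned before any such $j$, none of them is ever added. Together Steps A and B give $(i,k)\in E$ and $(i,j)\notin E$ for all $j<k$, i.e.\ (1); the reflection $\phi$ then yields the mirrored statements.

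The only real obstacle is bookkeeping rather than depth: the inequalities in Steps A and B are essentially tight, and become tight at opposite ends of the parameter range ($\alpha\to 1^+$ versus $\alpha$ large), which is exactly why $\beta$ is a maximum of two expressions; one must also confirm that every pruning comparison above is strict, so that the strict ``$>$'' in the pruning rule $D(x_v,x_{v'})\alpha>D(x_i,x_{v'})$ never produces a boundary case, and that self-loops / the source vertex can be ignored since the source never prunes anything ($\alpha>1$). None of the individual verifications is hard.
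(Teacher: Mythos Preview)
Your proof is correct and follows essentially the same approach as the paper: for each part you identify the anchor vertex ($x_{i-1}$ in (2), $x_k$ in (1)), show it is added to $N_{out}(i)$, and then verify that the pruning step it triggers removes all lower-indexed competitors, using the two inequalities $\beta\ge\alpha-1$ and $\beta(\alpha-1)\ge 1$ implied by $\beta=\max(\tfrac{1}{\alpha-1},\alpha-1)$. The paper's argument is terser---it phrases the survival and deletion conditions as ratio bounds $\tfrac{x_j-x_k}{x_i-x_k}>\tfrac{\beta}{1+\beta}\ge\tfrac{1}{\alpha}$ and $\tfrac{x_k-x_j}{x_i-x_j}\le\tfrac{1}{1+\beta}\le\tfrac{1}{\alpha}$---but the content is identical. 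Your version is in fact more complete in one respect: in Step~A you verify survival of $x_k$ against \emph{all} right-cluster vertices $i'\ne i$, whereas the paper only explicitly treats the range $k<j<i$ (the case $j>i$ is trivial since then $D(x_j,x_k)>D(x_i,x_k)$, but is left implicit). Your explicit isometry $\phi$ for the symmetry claim is also a nice touch that the paper leaves informal.
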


\begin{proof}
(1): For any $i\in [k+1,n]$, we can check that no vertex $j$ such that $k<j<i$ can delete $k$ from $i$'s neighborhood, because $\frac{x_j-x_k}{x_i-x_k}>\frac{\alpha^k\beta}{\alpha^k\beta+\alpha^k}\ge \frac{1}{\alpha}$. Thus, we have $(i,k)\in E$. Similarly, we have that $k$ will delete any vertex $j<k$ from $i$'s neighborhood because $\frac{x_k-x_j}{x_i-x_j}\le\frac{\alpha^k}{\alpha^k+\alpha^k\beta}\le\frac{1}{\alpha}$. These two inequalities use that $\beta=\max(\frac{1}{\alpha-1},\alpha-1)$

(2): For any $i\in [1,k]$, $x_{i-1}$ is the closest point on $x_i$'s left, so $(i,i-1)\in E$. Then, for any $j<i-1$, $j$ will be deleted from $i$'s neighbor by $i-1$ because $\frac{x_{i-1}-x_j}{x_i-x_j}<\frac{\alpha^{i-1}}{\alpha^i}=\frac{1}{\alpha}$.
\end{proof}

\begin{proof}[Proof of Theorem~\ref{thm:alpha_lb_delta}]
Based on the graph properties in Lemma~\ref{lm:line_property}, let us determine the length of the shortest path from a starting point $s$ (selected arbitrarily by the DiskANN algorithm) to a constant approximate nearest neighbor of a given query $q$. We select our query $q$ to be either $0$ or $2\alpha^k+\alpha^k\beta$, i.e., one of the two endpoints of the data set,  whichever is farther from $x_s$. To find an $O(1)$-approximate nearest neighbor of the query $q$ within $l$ steps of GreedySearch, there should be at least one path with less than $l$ hops from $s$ to $q$'s approximate nearest neighbor. WLOG, let's assume $q=0$ and $s>k$. By Property (1) of Lemma~\ref{lm:line_property}, among $\{1...k\}$, the vertex $k$ is the only neighbor of any vertex on the right of $k$. By Property (2) of Lemma~\ref{lm:line_property}, for any vertex $i\in[1,k]$, its only neighbor on its left is $i-1$. Therefore, it takes at least $l\ge \Omega(n)$ and $l\ge \Omega(\log \Delta)$ steps for slow preprocessing DiskANN with GreedySearch to reach any $O(1)$- approximate nearest neighbor in this constructed instance.
\end{proof}

\section{More experimental results}
\subsection{Hard instance for KD-Tree based nearest neighbor search algorithm}\label{sec:hard_kdt}

Some nearest neighbor search algorithms use KD-tree to find the entry point for greedy search. In this case, we design a hard instance where KD-tree cannot get good entry point close to the nearest neighbor. See Figure~\ref{fig:hard_kdt}. We draw our constructed instance for $n=10^6$. It is easy to mimic this instance for other data sizes. 

\begin{figure}[!ht]
\begin{minipage}{0.6\textwidth}
\centering
\includegraphics[width=0.95\textwidth]{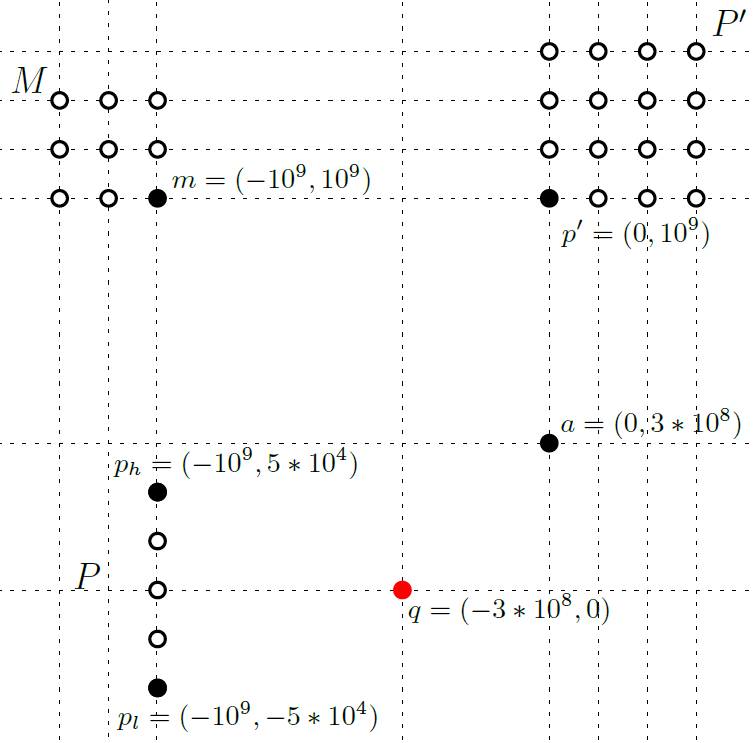}
\end{minipage}
\begin{minipage}{0.4\textwidth}
\centering
\captionof{figure}{Our  constructed hard instance for SPTAG on the scale $n=10^6$. The instance lives in a two-dimensional Euclidean space, and therefore has a constant doubling dimension. Black dots (solid or hollow) are points in the database, the red dot is the query point. {\bf Grids are only used to show the layout structure.} See section~\ref{sec:hard_kdt} for detailed description.}
\label{fig:hard_kdt}
\end{minipage}
\end{figure}

\paragraph{Description of instance in Figure~\ref{fig:hard_kdt}} Our instance has $6$ dimensions, where the first two dimensions of the instance are similar to Figure~\ref{fig:hard_random_init} but with different choices of parameters. Again, the vertex set $V$ consists of three sets of points $M,P,P'$ (with size $0.1n,0.1n,0.8n$ respectively) and a single answer point $a$: $M$ is a $\sqrt{|M|}\times\sqrt{|M|}$ square grid with unit length 1 and with bottom right corner $m=(-10^9,10^9)$. $P$ is a vertical chain of points consisting of unit intervals from $p_l=(-10^9,-0.05n)$ to $p_h=(-10^9,0.05n)$. $P'$ is a $\sqrt{|P'|}\times\sqrt{|P'|}$ square grid with unit length 1 whose bottom left corner is $p'=(0,10^9)$. The answer point is $a=(0,3*10^8)$. The query point is $q=(-3*10^8,0)$. For a vertex $v\in M\cup P'\cup \{a\}$, each of $v$'s other 4 coordinates are sampled from a uniform distribution supported on $[5*10^7,6*10^7]$. For a vertex $v\in P$, its other $4$ coordinates are sampled from a uniform distribution supported on $[10^7,2*10^7]$. And for point $q$, its other $4$ coordinates are all $0$. Note that though such choices of parameters are tailored to the implementation of KD-trees used in SPTAG and EFANNA in \cite{wang2021comprehensive}, some general ideas are reusable for constructing hard instances for other implementations.

\paragraph{Intuition for instance in Figure~\ref{fig:hard_kdt}} The main reason why we construct a new example here is to handle the use of KD-tree to search for a good starting point. Implementation of KD-tree provided in \cite{wang2021comprehensive} always randomly picks one of the top $5$ dimensions with the largest variance as the splitting dimension, and divides the vertices into two halves at their mean. In Figure~\ref{fig:hard_kdt}, because of the separation on the other 4 dimensions, our construction can make sure that KD-tree quickly gets to a subtree with  vertices only in $P$. Then KD-tree will only horizontally split the chain from $p_h$ to $p_l$ or split via the other 4 dimensions. In the vertical axis, the coordinates for vertices in $P$ and $a$ are quite close, so KD-tree will assign them a low distance estimation based on only a horizontal split, resulting in KD-tree scanning all vertices on the chain before scanning other vertices outside of the set $P$. As long as we make the KD-tree select a vertex in  $P$ as the starting point, we can ensure (as in Figure~\ref{fig:hard_random_init}) that $GreedySearch$ will scan all vertices in $P$ before going to $M$ or $P'$.

\subsection{More experiments on other popular nearest neighbor search algorithms}
We further test the other 7 popular nearest neighbor search algorithms studied in the survey \cite{wang2021comprehensive}. We use the same setting as in Section~\ref{sec:experiment}.  We run each algorithm for 20 different data sizes $n\in \{10^5, 2\cdot 10^5, \ldots, 2 \cdot 10^6\}$ using hard instances in Figure~\ref{fig:hard_knn_init}, Figure~\ref{fig:hard_random_init}, or Figure~\ref{fig:hard_kdt} (introduced in Appendix~\ref{sec:hard_kdt}). We plot the Recall@5 rate for answering the query with queue length $L$ equal to $p n$ where the percentage $p$ is enumerated from the set $1\%, 2\%, \ldots,12\%, 15\%, 18\%, 20\%, 30\%, 40\%, 50\%$.

\paragraph{NGT \cite{iwasaki2018optimization}} We use NGT's implementation from the authors' GitHub repository \cite{NGT}. We run NGT on the hard instance in Figure~\ref{fig:hard_random_init}, using all default parameters as stated in GitHub's readme, except that we use the command ``-i g'' to generate only the graph index, because of our focus on graph-based nearest neighbor search algorithms.  We use command ``-p 1'' to set the number of threads to 1. We run this experiment 10 times and report the average recall.

\paragraph{SSG \cite{ssg2022paper}} We use SSG implementation from the authors GitHub repository \cite{SSG}. We run SSG on the hard instance in Figure~\ref{fig:hard_knn_init}, using parameters $K=200,L=200,iter=12,S=10,R=100$ (for building KNN graph) $L=100,R=50,Angle=60$ (constructing SSG). The parameters chosen here are copied from the author's selected parameters for data set SIFT1M \cite{sift1m}, whose data size is close to ours. We run this experiment 10 times using different random seeds and report the average recall.

\paragraph{KGraph} We use KGraph implementation due to \cite{wang2021comprehensive}, from the   GitHub repository \cite{WEAVESS}. We run KGraph on the hard instance in Figure~\ref{fig:hard_knn_init} using parameter $K=100,L=130,iter=12,S=20,R=50$. Parameters here are copied from the authors' selected parameters used for their synthetic data set named $``n\_1000000"$, whose size is close to ours. We run this experiment 5 times using different random seeds and report the average recall.

\paragraph{DPG \cite{li2019approximate}} We use DPG implementation due to \cite{wang2021comprehensive}, from the  GitHub repository \cite{WEAVESS}. We run DPG on the hard instance in Figure~\ref{fig:hard_knn_init} using parameter $K=100,L=100,iter=12,S=20,R=300$. Parameters here are copied from the authors' selected parameters for their synthetic dataset named $``n\_1000000"$, whose size is close to ours. We run this experiment 10 times using different random seeds and report the average recall.

\paragraph{NSW \cite{nsw2014paper}} We use NSW's implementation due to \cite{wang2021comprehensive}, from the GitHub repository \cite{WEAVESS}. We run NSW on the hard instance in Figure~\ref{fig:hard_knn_init} (with a different vertex permutation) using parameter $max\_m0=100,ef\_construction=400$. Parameters here are copied from the author's selected parameters for their synthetic dataset named $``n\_1000000"$, whose data size is close to ours. We run this experiment 5 times using different random seeds and report the average recall.

\paragraph{SPTATG-KDT \cite{spann2021sptagpaper}} We use SPTATG-KDT implementation due to \cite{wang2021comprehensive}, from the GitHub repository \cite{WEAVESS}. We run SPTATG-KDT on the hard instance in Figure~\ref{fig:hard_kdt} using parameters $KDT\_number=1,TPT\_number=16, TPT\_leaf\_size=1500, scale=2, CEF=1500$. Parameters here are copied from the authors' selected parameters for their synthetic dataset named $``n\_1000000"$, whose size is close to ours. We run this experiment 5 times and report the average recall.

\paragraph{EFANNA \cite{efanna2016paper}} We use EFANNA's implementation due to \cite{wang2021comprehensive}, from the GitHub repository \cite{WEAVESS}. We run EFANNA on the hard instance in Figure~\ref{fig:hard_kdt} using parameter $nTrees=4, mLevel=8, K=80, L=140, iter=7, S=25, R=150$. Parameters here are copied from the authors' selected parameters for their synthetic dataset named $``n\_1000000"$, whose size is close to ours. We run this experiment 10 times and report the average recall.

Experimental results for these 7 algorithms are plotted in Figure~\ref{fig:exp_random_appendix}, Figure~\ref{fig:exp_knn_appendix}, and Figure~\ref{fig:exp_kdt_appendix}. We can see that all algorithms achieve suboptimal Recall@5 rates until the queue length $L$ is greater than $10\%$ of the data size.

\begin{figure}
\centering
\includegraphics[width=.48\textwidth]{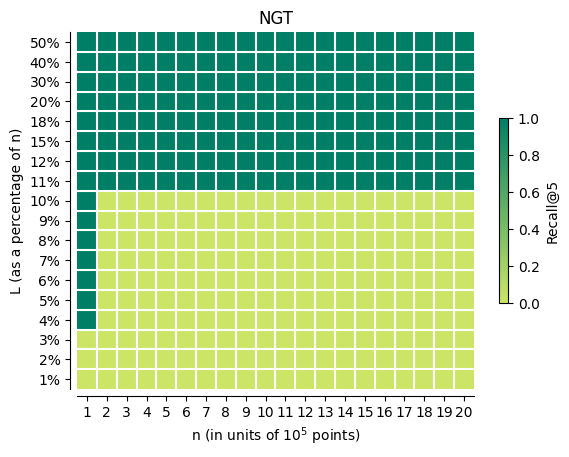}
\caption{Results for running NGT algorithm on the family of instances in Figure~\ref{fig:hard_random_init}. The horizontal axis represents the data size $n$, in multiples of $10^5$ points. The vertical axis represents the size of the search queue length $L$ in terms of the percentage of the data size. each pixel represents a query result, with its $Recall@5\in [0,1]$ mapping to the spectrum on the right. We run NGT algorithm 10 times and report the average recall rate.}
\label{fig:exp_random_appendix}
\end{figure}

\begin{figure}
\centering
\includegraphics[width=.48\textwidth]{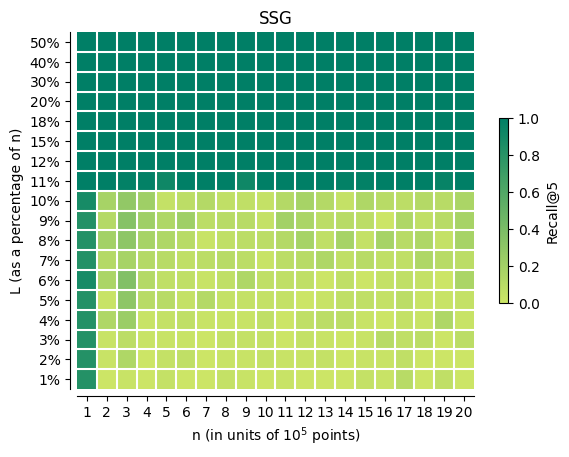}
\includegraphics[width=.48\textwidth]{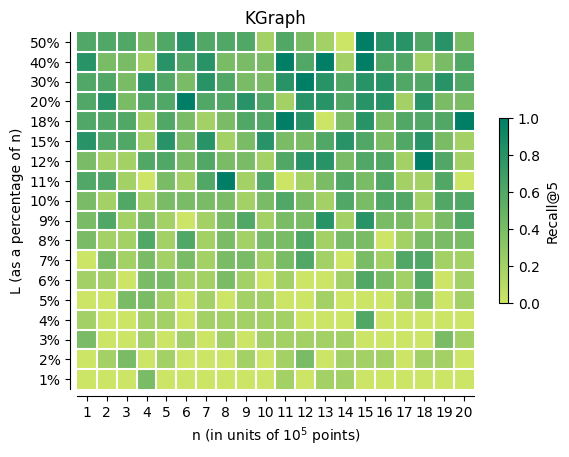}
\includegraphics[width=.48\textwidth]{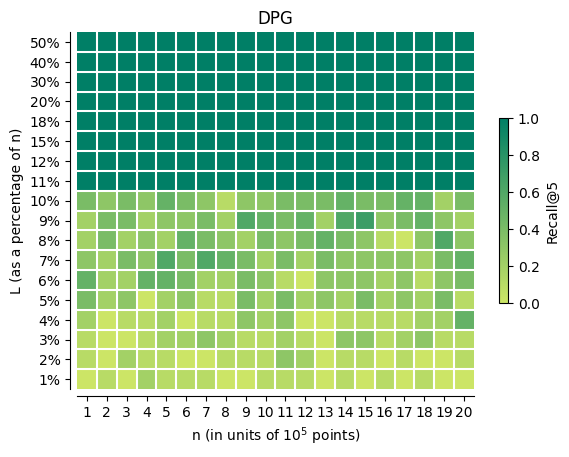}
\includegraphics[width=.48\textwidth]{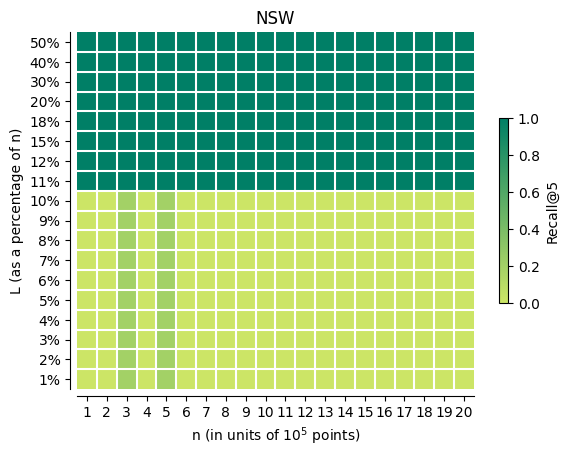}
\caption{Results for running SSG, KGraph, DPG, NSW algorithms on the family of instances in Figure~\ref{fig:hard_knn_init}. The horizontal axis represents the data size $n$, in multiples of $10^5$ points. The vertical axis represents the size of the search queue length $L$ in terms of the percentage of the data size. Each pixel represents a query result, with its $Recall@5\in [0,1]$ mapping to the spectrum on the right. We run each algorithm 10 (or 5) times and report the average recall rate.}
\label{fig:exp_knn_appendix}
\end{figure}

\begin{figure}
\centering
\includegraphics[width=.48\textwidth]{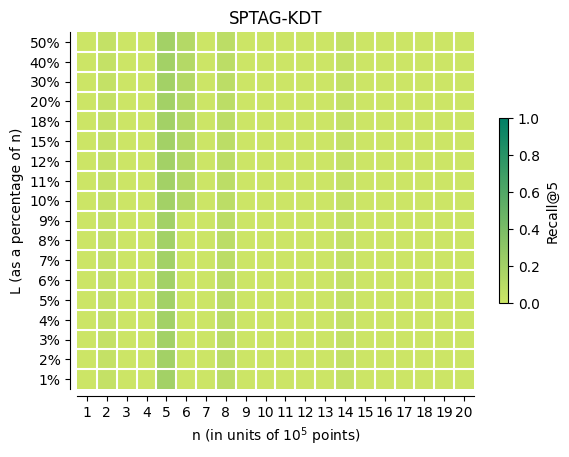}
\includegraphics[width=.48\textwidth]{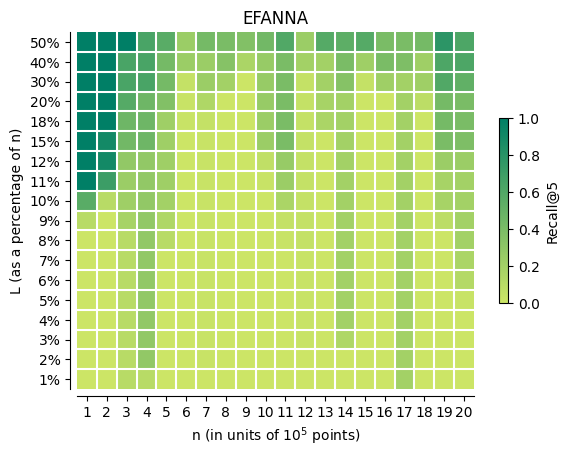}
\caption{Results for running SPTAG-KDT and EFANNA algorithms on the family of instances in Figure~\ref{fig:hard_kdt}. The horizontal axis represents the data size $n$, in multiples of $10^5$ points. The vertical axis represents the size of the search queue length $L$ in terms of the percentage of the data size. Each pixel represents a query result, with its $Recall@5\in [0,1]$ mapping to the spectrum on the right. We run each algorithm 10 (or 5) times and report the average recall rate.}
\label{fig:exp_kdt_appendix}
\end{figure}

\end{document}